\theoremstyle{plain}
\newtheorem{theorem}{Theorem}[section]
\newtheorem{lemma}[theorem]{Lemma} 
\newtheorem{proposition}[theorem]{Proposition}
\newtheorem{corollary}[theorem]{Corollary}
\newtheorem{remark}[theorem]{Remark}
\theoremstyle{definition}
\newtheorem{definition}[theorem]{Definition}
\newcounter{todocounter}
\newcommand{\introthm}[2]{\vspace{1mm}\noindent \textbf{\cref{#1}} \textit{#2} \vspace{1mm}}
\newcommand{\sse}{\subseteq}
\newcommand{\zo}{\{0,1\}}
\newcommand{\zon}{\zo^n}
\newcommand{\defn}{\stackrel{\text{\tiny def}}{=}}
\newcommand{\degtwo}{\mathsf{deg}_2}
\newcommand{\rank}{\mathsf{rank}}
\newcommand{\fhat}{\widehat{f}}
\newcommand{\ghat}{\widehat{g}}
\newcommand{\pmo}{\set{-1,1}}
\newcommand{\pmon}{\pmo^n}
\newcommand{\I}{\mathbf{I}}
\newcommand{\Expt}{\mathbf{E}}
\newcommand{\sps}{\mathsf{sparsity}}
\newcommand{\card}[1]{\left | \set{#1}\right |}
\newcommand{\set}[1]{\left\{ #1 \right\}}
\newcommand{\etal}{{et al}.}
\newcommand{\DT}{\textsf{DT}}
\newcommand{\sens}{\mathsf{s}}
\newcommand{\bsens}{\mathsf{bs}}
\newcommand{\cert}{\mathsf{C}}
\newcommand{\alt}{\mathsf{alt}}
\renewcommand{\deg}{\mathsf{deg}}
\newcommand{\dc}{\mathsf{dc}}
\newcommand{\negs}{\mathsf{negs}}
\newcommand{\calA}{{\cal A}}
\newcommand{\calB}{{\cal B}}
\newcommand{\calC}{{\cal C}}
\newcommand{\calF}{{\cal F}}
\newcommand{\calG}{{\cal G}}
\newcommand{\calM}{{\cal M}}
\newcommand{\ADDR}{\mathsf{ADDR}}
\newcommand{\CC}{\mathsf{CC}}
\newcommand{\XOR}{{\sc Xor}}
\newcommand{\F}{{\mathbb{F}}}
\newcommand{\N}{{\mathbb{N}}}
\newcommand{\Z}{{\mathbb{Z}}}
\renewcommand{\R}{{\mathbb{R}}}
\newcommand{\degm}{\mathsf{deg}_m}
\renewcommand{\deg}{\mathsf{deg}}
\def\movetoappendix{1}
\newenvironment{aproof}[2]
  { \@nameuse{collect}{appendix}
  { \subsection{#1} \label{#2} \begin{proof} } {\end{proof}}
  }{\@nameuse{endcollect}}
\newenvironment{appsection}[2]
  { \@nameuse{collect}{appendix}
  { \subsection{#1} \label{#2} }
  {}
  }{\@nameuse{endcollect}}
        \renewenvironment{appsection}[2]{} {}
\title{Alternation, Sparsity and Sensitivity : \\ Bounds and Exponential Gaps}
\author{Krishnamoorthy Dinesh \and  Jayalal Sarma}
\date{
Department of Computer Science \& Engineering \\
Indian Institute of Technology Madras, Chennai, India \\
{\small \texttt{\{kdinesh,jayalal\}@cse.iitm.ac.in}} \\
\vspace{1mm}
\vspace{-1mm}
}
\begin{document}
\maketitle

\begin{abstract} 
	The well-known \textsc{Sensitivity Conjecture} regarding combinatorial
	complexity measures on Boolean functions states that for any Boolean
	function $f:\zon \to \zo$, block sensitivity of $f$ is polynomially
	related to sensitivity of $f$ (denoted by $\sens(f)$). From the
	complexity theory side, the \textsc{\XOR~Log-Rank Conjecture} states that for
	any Boolean function, $f:\zon \to \zo$ the communication complexity of
	a related function $f^{\oplus}:\zon \times \zon \to \zo$ (defined as
	$f^{\oplus}(x,y) = f(x \oplus y)$) is bounded by polynomial in
	logarithm of the sparsity of $f$ (the number of non-zero Fourier
	coefficients for $f$, denoted by $\sps(f)$). Both the conjectures play
	a central role in the domains in which they are studied.

A recent result of Lin and Zhang~(2017) implies that to confirm the above two
	conjectures it suffices to upper bound alternation of $f$ (denoted by $\alt(f)$) for all Boolean functions $f$
	by polynomial in $\sens(f)$ and logarithm of $\sps(f)$, respectively. 
	In this context, we show the following results:

\begin{itemize}
\item We show that there exists a family of Boolean functions for which
	$\alt(f)$ is at least \textit{exponential} in  $\sens(f)$ and
		$\alt(f)$ is at least \textit{exponential} in
		$\log\sps(f)$. En route to the proof, we also show an exponential gap
		between $\alt(f)$ and the decision tree complexity of $f$,
		which might be of independent interest.
\item As our main result, we show that, despite the above exponential gap between $\alt(f)$ and $\log\sps(f)$, the \textsc{\XOR~Log-Rank Conjecture} is true for functions with the alternation upper
	bounded by $\poly(\log n)$. It is easy to observe that the \textsc{Sensitivity Conjecture}
	is also true for this class of functions.  
\item  The starting point for the
	above result is the observation (derived from Lin and Zhang (2017)) that for
	any Boolean function $f$, $\deg(f) \le \alt(f)\degtwo(f)\degm(f)$
	where $\deg(f)$, $\degtwo(f)$ and $\degm(f)$ are the degrees of $f$ over $\R$, $\F_2$ and $\Z_m$ respectively. We
	give three further applications of this bound: \textsf{(1)}~We show that for Boolean functions $f$ of constant alternation  have $\degtwo(f) = \Omega(\log n)$.   \textsf{(2)}~Moreover, these functions also have high
	sparsity  ($\exp(\Omega(\sqrt{\deg(f)})$), thus partially answering
	a question of Kulkarni and Santha (2013). \textsf{(3)}~We
	observe that our relation upper bounding real degree also improves the upper
	bound for influence to $\degtwo(f)^2 \cdot \alt(f)$ improving Guo and 
	Komargodski (2017).
\end{itemize}

\end{abstract}

\newpage
\section{Introduction}
A central theme of research in Boolean function complexity is relating the 
complexity measures of Boolean functions (see~\cite{BW02} for a survey). For a Boolean function $f:\zon \to \zo$, \emph{sensitivity} of $f$ on $x \in \zon$, is the maximum number of indices $i \in [n]$, such that $f(x \oplus e_i) \ne f(x)$ where $e_i \in \zon$ with exactly the $i^{th}$ bit as $1$. The sensitivity of 
$f$ (denoted by $\sens(f)$) is the maximum sensitivity of $f$ over all inputs. A 
related parameter is the \emph{block sensitivity} of $f$ (denoted by $\bsens(f)$), where we allow disjoint 
blocks of indices to be flipped instead of a single bit. Another parameter is the 
\emph{degree} (denoted by $\deg(f)$) 
of a multilinear polynomial over reals that agrees with $f$ on Boolean 
inputs. If the polynomial is over $\F_2$, then the degree of the polynomial is 
denoted by $\degtwo(f)$.

 Nisan and Szegedy conjectured that for an arbitrary function $f:\zon \to \zo$, $\bsens(f)  
 \le \poly(\sens(f))$ and this is popularly known as the \emph{\textsc{Sensitivity Conjecture}}
 \cite{NS92}. 
Though the measures $\bsens(f)$ and $\sens(f)$ were introduced to understand
the {\sc Crew-Pram} model of computation~\cite{CDR86,N91}, subsequent works \cite{BW02,BBCMW01,Tal13}  
showed connection to other Boolean function parameters, in particular, $\sqrt{\bsens(f)} \le$
 $\deg(f) \le \bsens(f)^3$.  Hence the \textsc{Sensitivity Conjecture} 
can equivalently be stated as : for any Boolean function $f$, $\deg(f) \le \poly(\sens(f))$. 
This question has been 
extensively studied in \cite{Sim83,KK04,APV16} (see~\cite{HKP11} for a survey) and for various restricted classes of Boolean functions in~\cite{N91,Tur84,Cha05,ST16,BLTV16}. 
There are also recent approaches to settle the conjecture via a formulation in terms of a communication game~\cite{GKS15} and via a formulation in terms of distributions on the Fourier spectrum of Boolean functions~\cite{GSW16}.

Moving on to a computational setting, for a function $F:\zon \times \zon \to \zo$, consider two parties Alice and Bob having  $x \in \zon$ and  $y \in \zon$ respectively. The goal is to come up with a deterministic protocol to compute $F(x,y)$ while minimizing the number of bits exchanged. We call the number of bits exchanged over all inputs in the worst case  as the cost of the protocol. A natural question is to understand cost of the minimum cost protocol computing $F$ (denoted by $\CC(F)$). Lovasz and Saks~\cite{LS93} related this to the rank over reals of a matrix $M_F$ defined as $[F(x,y)]_{(x,y) \in \zon \times \zon}$ by showing that $\CC(F)$ is at least $\log \rank(M_F)$ and they conjectured that this is tight up to polynomial factors. 
More precisely, they conjectured that for all $F$, $\CC(F)$ is at most polynomial in $\log \rank(M_F)$. Popularly known as the  \emph{\textsc{Log-Rank Conjecture}}, this has been intensely studied in the past 25 years with the best known upper bound being $O(\sqrt{\rank(M_F)} \log \rank(M_F))$ (see~\cite{L16} and the references therein). 
With the conjecture being far from settled in the general case, it is natural to looks at special class of functions $F$.

One such class that can be considered is the XOR functions of the form $F(x,y) = f(x\oplus y)$ where $f$ is an $n$ bit Boolean functions. For such an $F$, the $\rank(M_F)$ is exactly the sparsity of $f$ in its Fourier representation (see~\cref{sec:prelims} for a definition).
For $x,y \in \zon$, with $f^\oplus(x,y) = f(x\oplus y)$,  
the \emph{\textsc{\XOR~Log-Rank Conjecture}} (proposed in \cite{ZS10}) says that the deterministic 
communication complexity of $f^\oplus$ (denoted by $\CC_{\oplus}(f)$) is  upper bounded by a polynomial in logarithm of sparsity of $f$ (denoted by $\sps(f)$). 
The conjecture is known to hold for restricted classes of Boolean functions like monotone
functions \cite{MO09}, symmetric functions~\cite{ZS09}, functions computable by constant depth polynomial size  circuits~\cite{KS13},  functions of small spectral norm~\cite{TWXZ13} and read restricted polynomials over $\F_2$~\cite{CW18}.

Recently, Lin and Zhang~\cite{LZ17} studied both the above stated conjectures in connection 
to alternation, a measure of non-monotonicity of $f$ (denoted by $\alt(f)$) which, stated informally, is the maximum number of times the value of $f$ changes along any chain on the Boolean cube from $0^n$ to $1^n$ (see~\cref{sec:prelims} for  a definition). They showed that for any Boolean function $f$, $\bsens(f) 
= O(\sens(f)\alt(f)^2)$ and $\CC_{\oplus}(f)= O(\log \sps(f) \alt(f)^2)$. These results shows 
that to settle the \textsc{Sensitivity Conjecture}, it suffices to show that for any Boolean 
function $f$, $\alt(f) \le \poly(\sens(f))$ and to settle the \textsc{\XOR~Log-Rank 
Conjecture}, it suffices to show that for any Boolean function $f$, $\alt(f) \le \poly(\log 
\sps(f))$. 
\paragraph{{\bf Our Results :}} As a first step, we ask whether it is indeed true that for all Boolean functions $f$, $\alt(f)=O(\poly(s(f)))$  and   for all Boolean functions $f$, $\alt(f) \le \poly(\log \sps(f))$. 
We answer both of these questions in the negative by exhibiting a family of Boolean functions $\calF = \set{f_k \mid  k \in \N}$ (\cref{def:gap-fun}) for which $\alt(f_k)$ is at least \emph{exponential} in $\sens(f_k)$ and $\alt(f_k)$ is at least \emph{exponential} in $\log \sps(f_k)$.

\begin{theorem}\label{gap:exp}
There exists a family of Boolean functions $\calF = \{f_k:\zo^{n_k} \to \zo$ $\mid k \in \N\}$ such that 
$\alt(f_k) \ge 2^{\sens(f_k)}-1$ and $\alt(f_k) \ge 2^{(\log \sps(f_k))/2}-1$.
\end{theorem}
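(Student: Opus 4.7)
The plan is to prove the theorem by establishing three separate bounds on the family $\{f_k\}$ of \cref{def:gap-fun}: (i)~$\alt(f_k) \ge 2^k - 1$, (ii)~$\sens(f_k) \le k$, and (iii)~$\sps(f_k) \le 4^k$. Combining (i) with (ii) yields $\alt(f_k) \ge 2^{\sens(f_k)} - 1$, and combining (i) with (iii) yields $\alt(f_k) \ge 2^{(\log \sps(f_k))/2} - 1$, which is exactly the claim.  So the proof reduces to verifying these three structural bounds for the specific $f_k$.

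For the alternation lower bound, I would exhibit an explicit monotone chain $0^{n_k} = z^{(0)} < z^{(1)} < \cdots < z^{(n_k)} = 1^{n_k}$ along which $f_k$ flips value $2^k - 1$ times, proceeding by induction on $k$.  The natural inductive step doubles the number of alternations per level: if some chain for $f_{k-1}$ achieves $2^{k-1} - 1$ alternations, then $f_k$, built by gluing two copies of $f_{k-1}$-type structure together through a fresh control bit (with the second copy negated), admits a chain that first traverses the first copy (gaining $2^{k-1} - 1$ alternations), then flips the control bit (one alternation), and finally traverses the second copy (another $2^{k-1} - 1$), giving $2^k - 1$ in total.

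For the sensitivity bound, I would argue that at any input $x$ the hierarchical structure of $f_k$ activates only one sub-branch at each of the $k$ recursion levels; flipping a bit lying outside these $k$ active branches either fails to propagate through the control structure or does not alter the computed value.  A level-by-level case analysis then gives $\sens_{f_k}(x) \le k$ for every $x$.  For the sparsity bound, the cleanest route is to show that $f_k$ admits a decision tree of depth $k$ (again by querying the control bits along the recursion) and then invoke the standard fact $\sps(f) \le 4^{\DT(f)}$, which follows because the indicator of each of the $2^{\DT(f)}$ leaf-subcubes has Fourier support of size $2^{\DT(f)}$.

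The main difficulty, and what the construction of $f_k$ is specifically designed to overcome, is achieving high alternation simultaneously with low decision-tree complexity.  A priori one has $\alt(f) \le 2^{\DT(f)} - 1$: each leaf of the decision tree is a subcube fixing some bits, and a monotone chain visits each such subcube in a single contiguous interval (once a fixed ``$0$'' bit of a leaf is flipped to $1$, monotonicity prevents returning), so $\alt(f)$ is bounded by the number of leaves visited minus one.  The gap function must therefore nearly saturate this inequality, and the recursive doubling construction is precisely what allows a chain to enter successive leaves with opposite labels.  The control hierarchy simultaneously keeps the decision-tree depth, and hence the sensitivity and $\log$-sparsity, logarithmic in the number of leaves visited, producing the exponential separation.
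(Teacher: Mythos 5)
Your proposal follows essentially the same route as the paper: the paper proves $\alt(f_k) = 2^{\DT(f_k)}-1$ via exactly the inductive chain construction you sketch (\cref{newgap:alt:dt}), and then applies $\sens(f) \le \DT(f)$ and $\log \sps(f) \le 2\deg(f) \le 2\DT(f)$ --- the general facts that your direct sensitivity and sparsity arguments re-derive for this particular $f_k$. One concrete correction to your inductive step: in $f_k$ as given by \cref{def:gap-fun}, the two subtrees of the root compute the \emph{same} function up to variable renaming; the second copy is \emph{not} negated. The alternation gained when the control bit flips comes from the fact that the first subfunction ends the chain at value $f_{k-1}(1^m)=1$ while the second begins at $f_{k-1}(0^m)=0$. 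If you literally negated the second copy, the value before and after the control-bit flip would both be $1$, that flip would contribute no alternation, and your chain would only achieve $2^k-2$; so drop the negation and justify the middle alternation by the leaf-labeling convention instead. Separately, your closing observation that $\alt(f) \le 2^{\DT(f)}-1$ --- because each leaf subcube meets a maximal chain in a contiguous interval --- is correct and is in fact sharper than the bound $2^{\DT(f)+1}-1$ that the paper obtains in \cref{alt:dt} via Markov's theorem, though it is not needed for the statement at hand.
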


The main property of $f_k \in \calF$ which we exploit to prove~\cref{gap:exp} is that $
\alt(f_k) = 2^{\DT(f_k)}-1$ (\cref{newgap:alt:dt}) where $\DT(f_k)$ is the depth of the 
optimal decision tree computing $f_k$ (see~\cref{sec:prelims} for definition). We also show 
an asymptotically matching upper bound for alternation of any Boolean function. More 
precisely, for any $f:\zon \to \zo$, 
we show that  $ \alt(f) \le 2^{\DT(f)+1}-1$ (\cref{alt:dt}).

Though the function family $\calF$ rules out settling the \textsc{Sensitivity Conjecture} 
(\textsc{\XOR~Log-Rank Conjecture} resp.) via upper 
bounding alternation by a polynomial in sensitivity (polynomial in the logarithm of sparsity 
resp.) for all Boolean functions, it is partly unsatisfactory since both the conjectures 
holds for functions with shallow decision trees computing them and the Boolean functions in $\calF$, by definition, has such a property (see~\cref{sec:compose} for details).

In fact, any $f:\zon \to \zo$ for which $\alt(f) = 2^{\Omega(\DT(f))}$, must satisfy $ \DT(f) = O(\log n)$. In addition, if $f$ depends on all the input variables, the \textsc{Sensitivity Conjecture} is true for $f$. Notice that, for all $f_k \in \calF$, $\DT(f_k) = \log n_k$ and $f_k$ depends on all the $n_k$ variables.
Hence a natural question is, does there exist another family of functions $f$ where $\alt(f)$ is at least super-polynomial in $\sens(f)$, but $\DT(f)$ is not logarithmic in $n$. To this end, we exhibit
a  family of Boolean functions $\calG$, such that for all $g \in \calG$, $\alt(g)$ is \textit{super-linear} in $\sens(g)$ and $\DT(g)$ is $\omega(\log n)$ where $n$ is the number of variables in $g$. 
\begin{theorem}\label{alt:sens:sep}
There exists a family of Boolean functions $\calG = \{ g_k :\zo^{n_k} \to \zo$ $\mid k \in \N\}$ such 
that $\alt(g_k) \ge \sens(g_k)^{\log_3 5}$ while $\DT(g_k)$ is $\Omega(n_k^{\log_6 3})$. 
\end{theorem}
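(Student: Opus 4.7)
The plan is to realise $\calG$ via iterated compositions of a carefully chosen six-variable base function. Take $g_1:\zo^6 \to \zo$ to be the address function $\ADDR_{2,4}$ on two address bits $x_1,x_2$ and $2^2=4$ data bits $y_0,\ldots,y_3$, defined by $g_1(x_1,x_2,y_0,\ldots,y_3) = y_{2x_1 + x_2}$, and set $g_k := g_1 \circ g_{k-1}$, so that $g_k$ has $n_k = 6^k$ inputs arranged as $6$ blocks of $n_{k-1}$ variables each. Direct inspection yields $\sens(g_1)=3$, $\DT(g_1)=3$, and $\alt(g_1)=5$: the chain that flips bits in the order $y_0, x_2, y_1, x_1, y_3, y_2$ takes $g_1$ through the values $0,1,0,1,0,1,1$, and the upper bound $5$ is forced since any chain of length $6$ from $0^6$ to $1^6$ admits at most $5$ parity flips. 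Two further properties---that $\ADDR_{2,4}$ attains sensitivity $3$ at inputs of \emph{both} output values, and that $g_1(0^6)=0$, $g_1(1^6)=1$---look cosmetic but are essential for the composition analysis.

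The crux is two composition inequalities. First I would establish $\sens(f\circ g) = \sens(f)\cdot\sens(g)$ whenever $g$ attains its sensitivity at inputs of both output values: a sensitivity-maximising input $y$ of $f$ lifts by placing, inside each outer block $i$, a sensitivity-maximising input of $g$ with value $y_i$, so the sensitive coordinates of $f\circ g$ decompose as the disjoint union of sensitive coordinates of $g$ across the sensitive blocks of $f$. Iterating gives $\sens(g_k) = 3^k$, and hence $\DT(g_k)\ge\sens(g_k)=3^k$. The main composition step is the claim that if $g(0^n)=0$ and $g(1^n)=1$, then $\alt(f\circ g)\ge\alt(f)\cdot\alt(g)$. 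To prove it, fix chains $P = (y^{(0)},\ldots,y^{(m)})$ and $Q=(z^{(0)},\ldots,z^{(n)})$ realising $\alt(f)$ and $\alt(g)$, respectively, where $P$ flips outer coordinate $b_i$ at step $i$, and construct a chain in $\zo^{mn}$ by splicing, at each outer step $i$, a fresh copy of $Q$ inside block $b_i$ while previously-processed blocks sit at $1^n$ and untouched blocks at $0^n$. The boundary hypotheses on $g$ guarantee that the outer $f$-input equals $y^{(i-1)}$ precisely when $g$ on block $b_i$ is $0$ and $y^{(i)}$ precisely when it is $1$; so whenever $f(y^{(i-1)}) \ne f(y^{(i)})$, each of the $\alt(g)$ inner alternations yields an alternation of $f\circ g$. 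Summing over the $\alt(f)$ outer alternations gives $\alt(f)\cdot\alt(g)$ in total, and induction yields $\alt(g_k)\ge 5^k$.

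Assembling the pieces, $\alt(g_k) \ge 5^k = (3^k)^{\log_3 5} = \sens(g_k)^{\log_3 5}$ and $\DT(g_k) \ge 3^k = (6^k)^{\log_6 3} = n_k^{\log_6 3}$, which is precisely \cref{alt:sens:sep}. The one nontrivial step I foresee is the chain-splicing argument for alternation; once the base function satisfies $g(0^n)=0$ and $g(1^n)=1$ the bookkeeping is routine, but without that hypothesis the splicing can fail to track the outer chain $P$ at all. The matching of the exponents $\log_3 5$ and $\log_6 3$ is built into the choice $(\sens,\DT,\alt,n)=(3,3,5,6)$ of $\ADDR_{2,4}$, essentially the smallest non-monotone address function that delivers the bound.
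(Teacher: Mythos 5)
Your proposal is correct and follows essentially the same route as the paper: the same base function $\ADDR_2$ on six variables with $(\sens,\DT,\alt)=(3,3,5)$, the same composition lemma $\alt(f\circ g)\ge\alt(f)\cdot\alt(g)$ under the boundary hypothesis $g(0^n)\ne g(1^n)$ proved by the same chain-splicing construction, and the same iteration to get $\alt(g_k)\ge 5^k$ versus $\sens(g_k)\le 3^k$. The only (immaterial) difference is that you derive $\DT(g_k)\ge 3^k$ via $\DT\ge\sens$ and a sensitivity composition lower bound, whereas the paper invokes multiplicativity of decision tree depth under composition.
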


\noindent
The main tool used in proving~\cref{alt:sens:sep} is a bound on the alternation of composed 
Boolean functions (\cref{alt:composition}). 

As mentioned before, Lin and Zhang~\cite{LZ17} showed that \textsc{\XOR~Log-Rank Conjecture} is true 
for all Boolean functions satisfying $\alt(f) \le \poly(\log \sps(f))$. As our main result,  we further strengthen this when $\sps(f) < n$.
\begin{theorem}[Main] \label{alt:polylog}
For large enough $n$, the \textsc{\XOR~Log-Rank Conjecture} is true for all $f:\zon \to \zo$, such that  $
\alt(f) \le \poly(\log n)$ where $f$ depends on all its inputs.
\end{theorem}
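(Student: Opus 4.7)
The plan is to exploit the paper's observation $\deg(f) \le \alt(f)\degtwo(f)\degm(f)$ at $m = 2$, together with two elementary facts about Fourier sparsity, to force $\log \sps(f) = \Omega(\log n)$ for every $f$ in our class; once that is established, the Lin--Zhang bound $\CC_{\oplus}(f) = O(\log \sps(f) \cdot \alt(f)^2)$ collapses to $\poly(\log \sps(f))$ immediately.

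First I would prove the auxiliary lemma $\sps(f) \ge 2^{\degtwo(f)}$. Let $d = \degtwo(f)$ and choose $T^* \subseteq [n]$ with $|T^*|=d$ so that the monomial $\prod_{i \in T^*} x_i$ appears in the $\F_2$-representation of $f$. Restrict: let $g$ be the function on the $d$ variables indexed by $T^*$ obtained by setting $x_j = 0$ for every $j \notin T^*$. Killing every such variable annihilates exactly the $\F_2$-monomials that touch them, so the top monomial survives and $\degtwo(g)=d$. Writing the multilinear real expansion $g(y) = \sum_{T \subseteq [d]} b_T \prod_{i \in T} y_i$ with $b_T \in \Z$ and $b_{[d]}$ odd, the substitution $y_i = (1 - z_i)/2$ yields $\hat g(S) = (-1)^{|S|}\sum_{T \supseteq S} b_T/2^{|T|}$, whence
\[
2^d (-1)^{|S|} \hat g(S) \ =\ b_{[d]} \ +\ \sum_{T \supseteq S,\ T \neq [d]} 2^{d - |T|} b_T,
\]
and every summand on the right except $b_{[d]}$ is even, making the total odd and nonzero for every $S \subseteq [d]$; so $\sps(g) = 2^d$. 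Since restriction cannot increase sparsity — each $\hat g(S')$ equals $\sum_{R \subseteq [n] \setminus T^*} \hat f(S' \cup R)$, so any nonzero $\hat g(S')$ forces a nonzero $\hat f$ in its preimage under $S \mapsto S \cap T^*$ — we conclude $\sps(f) \ge \sps(g) = 2^d$.

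Next I would use the simple counting observation that if $f$ depends on all $n$ inputs, then every coordinate $i \in [n]$ appears in at least one $S \in \mathrm{supp}(\hat f)$ (else the discrete derivative $\partial_i f$ vanishes) and each such $S$ has $|S| \le \deg(f)$, giving $\sps(f) \cdot \deg(f) \ge n$. Chaining everything: from the paper's main observation at $m = 2$, $\deg(f) \le \alt(f) \degtwo(f)^2 \le \poly(\log n) \cdot (\log \sps(f))^2$, invoking the lemma and the hypothesis $\alt(f) \le \poly(\log n)$. Substituting into $n \le \sps(f) \cdot \deg(f)$ yields
\[
n \ \le\ \sps(f) \cdot \poly(\log n) \cdot (\log \sps(f))^2 ,
\]
and writing $\ell = \log \sps(f)$ this reads $2^\ell \cdot \ell^2 \cdot \poly(\log n) \ge n$, forcing $\ell \ge \log n - O(\log \log n) = \Omega(\log n)$ for $n$ large enough. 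Hence $\alt(f) \le \poly(\log n) \le \poly(\log \sps(f))$, and Lin--Zhang delivers $\CC_{\oplus}(f) \le O(\log \sps(f) \cdot \alt(f)^2) \le \poly(\log \sps(f))$, which is the \textsc{\XOR~Log-Rank Conjecture} for this class.

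The main obstacle is the sparsity lemma $\sps(f) \ge 2^{\degtwo(f)}$: reducing to a $d$-variable restriction that preserves the top $\F_2$-monomial is routine, but showing that every Boolean function on $d$ variables of $\F_2$-degree $d$ has \emph{all} $2^d$ Fourier characters nonzero is delicate. The $2$-adic parity argument above appears to be the cleanest route, crucially using that $b_{[d]}$ odd contributes a single odd term (at $2^{d - |T|} = 1$) which cannot be cancelled by the $b_T$ for $T \subsetneq [d]$, each of which is multiplied by a positive power of $2$ after clearing denominators. Everything else is bookkeeping once this structural lemma is in hand.
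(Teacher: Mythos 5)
Your argument is correct and rests on the same two pillars as the paper's proof: the bound $\deg(f)\le\alt(f)\cdot\degtwo(f)^2$ and the fact that a function depending on all $n$ inputs must have $\log \sps(f)=\Omega(\log n)$. The execution differs in a few ways, all legitimate. Where the paper splits into cases according to whether $\alt(f)\le \degtwo(f)$ or not, you use both ingredients simultaneously in the single chain $\deg(f)\le\alt(f)\degtwo(f)^2\le \poly(\log n)\cdot(\log\sps(f))^2$, which removes the case analysis. Where the paper cites $\degtwo(f)\le\log\sps(f)$ (for $\degtwo(f)>1$) from the literature, you reprove it via the restriction-to-the-top-monomial plus $2$-adic parity argument; that argument is sound (the only pedantic caveat is that the paper defines $\sps$ via the $\pm 1$-valued function $1-2f$, whose constant Fourier coefficient could in principle vanish, but this costs at most one character and hence at most an additive $1$ in $\log\sps$, which is immaterial). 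Where the paper proves $\deg(f)\cdot\sqrt{\sps(f)}\ge n$ via Proposition~\ref{fourier:lb} and Cauchy--Schwarz, you use the cruder covering count $\deg(f)\cdot\sps(f)\ge n$; this is weaker but still forces $\log\sps(f)\ge\log n-O(\log\log n)$, which is all that is needed. Finally, you close by feeding $\alt(f)\le\poly(\log\sps(f))$ into Lin--Zhang's bound $\CC_\oplus(f)=O(\log\sps(f)\cdot\alt(f)^2)$, whereas the paper closes via $\CC_\oplus(f)\le 2\DT(f)\le 2\deg(f)^4$; both endgames are valid. The net effect is that your proof is somewhat more self-contained and avoids the case split, at the cost of a slightly different (but still polylogarithmic) final bound.
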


\noindent
Let $\deg_m(f)$ be the degree of the polynomial agreed with $f$ on $\zon$ over the ring $\Z_m$. Our starting point in proving the above result is a relation connecting $\deg$, $\degtwo$, $\degm$ and $\alt$.  For all Boolean functions $f$,
\begin{equation}
\deg(f) \le \alt(f) \cdot \degtwo(f)\cdot \degm(f) \label{eq:deg:deg2}
\end{equation}
We remark that for a special cases, the above bound~(\ref{eq:deg:deg2}) on $\deg(f)$ is known to be true. For instance, if 
$f$ is a monotone, it can be shown\footnote{When $f$ is monotone, it is known that $\DT(f) \le \sens(f)^2$ and $\sens(f) \le \degtwo(f)$ (Corollary 5 and Proposition 4  resp., of~\cite{BW02}). Proposition 4 of~\cite{BW02} though states that $\sens(f) \le \deg(f)$ for any monotone $f$, the argument is also valid for $\degtwo(f)$. Since, $\deg(f) \le \DT(f)$ (see for instance~\cite{BW02}), $\deg(f) \le \degtwo(f)^2$.} that $\deg(f) \le \degtwo(f)^2$. 
However, there are functions of large alternation where $\degtwo(f)$ is constant while $\deg(f)$ is $n$ (for instance, parity on $n$ bits). 
Hence, we cannot upper bound degree by $\F_2$-degree in general but~\cref{eq:deg:deg2} (for $m=2$) says that we can indeed upper bound $\deg(f)$ by $\degtwo(f)$ using $\alt(f)$. This case ($m=2$) is implicit in~\cite{LZ17}. 
We now give three further applications of~\cref{eq:deg:deg2} (see~\cref{sec:deg-deg2:app}).

As our first application, we show that using a related result due to Gopalan \etal~\cite{GLS09} and~\cref{eq:deg:deg2} (with $m=2$), all Boolean functions of bounded alternation ($\alt(f) = O(1)$) must have $\degtwo(f) = \Omega(\log n)$  
(see~\cref{corr:deg2lb} for a generalization).

As a second application, we show that Boolean functions with bounded alternation have high sparsity.
Kulkarni and Santha~\cite{KS13} had studied the relation between $\log \sps(f)$ and $\deg(f)$ in the case of restricted families of monotone  functions and asked if they are linearly related in the case of monotone functions. In this direction, we show the following lower bound for $\log_2 \sps(f)$ in terms of $\deg(f)$.
\begin{theorem}\label{alt:bound}
For Boolean functions $f$ with $\alt(f) = O(1)$, $\log \sps(f) = \Omega(\sqrt{\deg(f)})$.
\end{theorem}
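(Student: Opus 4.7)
The plan is to use Equation~(\ref{eq:deg:deg2}) with $m=2$ as the engine. Setting $m=2$ makes $\degm(f) = \degtwo(f)$, and the inequality becomes $\deg(f) \le \alt(f) \cdot \degtwo(f)^2$. Under $\alt(f) = O(1)$, this rearranges to $\degtwo(f) \ge c\sqrt{\deg(f)}$ for an absolute constant $c > 0$. Thus the theorem reduces to establishing a lower bound of the form $\log\sps(f) = \Omega(\degtwo(f))$ for the bounded-alternation class.

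For this second step, I would aim for the cleaner inequality $\log\sps(f) \ge \degtwo(f) - O(\alt(f))$, which under $\alt(f) = O(1)$ and $\degtwo(f) \to \infty$ (forced by $\deg(f) \to \infty$ via the first step) yields the required $\Omega(\degtwo(f))$ bound. One natural route exploits the $2$-adic granularity of $\pm 1$ Fourier coefficients: every $\hat F(S)$ of $F = 1-2f$ lies in $2^{-n}\Z$, and the nonvanishing of a top-degree monomial in the $\F_2$-polynomial of $f$ constrains the $2$-adic valuations of the corresponding integer numerators, which combined with Parseval $\sum_S \hat F(S)^2 = 1$ forces sufficiently many nonzero coefficients. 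An alternate attack first handles $\alt(f)=1$ (monotone $f$) by invoking Proposition~4 of~\cite{BW02}, $\sens(f) \le \degtwo(f)$, together with a sensitivity-to-sparsity link (of the form $\log\sps(f) \ge \sens(f) - O(1)$), and then extends to $\alt(f) = O(1)$ by decomposing $f$ into $O(\alt(f))$ monotone layers along chains of the Boolean cube, at a polynomial-in-$\alt$ loss.

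Chaining the two steps, for any $f$ with $\alt(f) = O(1)$ and $\deg(f)$ growing, $\log\sps(f) \ge \degtwo(f) - O(1) \ge \Omega(\sqrt{\deg(f)})$. The main obstacle is this second step: the relation $\log\sps(f) = \Omega(\degtwo(f))$ is \emph{false} in general, since parity on $n$ bits has $\degtwo = 1$ but $\sps = 1$, so the bounded-alternation hypothesis must be used essentially. Identifying the right quantitative form---with an $\alt$-dependent slack that survives $\alt(f) = O(1)$ without superconstant blowup---is where the real work lies. In particular, a naive bypass via the easier bound $\log\sps(f) \ge \log\deg(f)$ would only yield $\Omega(\log\deg(f))$, which is exponentially weaker than the target $\Omega(\sqrt{\deg(f)})$.
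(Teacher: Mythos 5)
Your first step is exactly the paper's: with $m=2$, \cref{alt:deg:deg2} gives $\deg(f) \le \alt(f)\cdot\degtwo(f)^2 = O(\degtwo(f)^2)$, so the theorem reduces to lower bounding $\log\sps(f)$ by $\Omega(\degtwo(f))$. The genuine gap is in that second step, which you leave as a sketch of two candidate strategies rather than a proof, and which you misdiagnose as requiring the bounded-alternation hypothesis in an essential quantitative way. In fact the inequality you want, $\degtwo(f) \le \log\sps(f)$, is a known theorem of Bernasconi and Codenotti~\cite{BC99} valid for \emph{every} Boolean function with $\degtwo(f) > 1$; no alternation-dependent slack term is needed. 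Your parity counterexample only shows failure when $\degtwo(f)=1$, and that case is disposed of directly: if $\degtwo(f)=1$ then $f$ is a parity or its negation, and $\alt(f)=O(1)$ forces it to be a parity of $O(1)$ variables, so $\deg(f)=O(1)$ and the claim is vacuous. This is precisely the paper's proof: treat $\degtwo(f)=1$ separately, and otherwise chain $\deg(f) = O(\degtwo(f)^2) = O((\log\sps(f))^2)$.

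As written, neither of your proposed routes closes the gap. The $2$-adic granularity idea is indeed the germ of a proof of the Bernasconi--Codenotti bound, but you do not carry it out and explicitly defer ``the real work.'' The monotone-decomposition route is worse off: writing $f$ as an XOR of $\alt(f)$ monotone functions (the characterization of Blais \etal~\cite{BCOST15}) gives $\sps(f) \le \prod_i \sps(g_i)$, i.e., \emph{sub}additivity of $\log\sps$ under XOR, which points the wrong way for a lower bound on $\sps(f)$; and the ``sensitivity-to-sparsity link'' $\log\sps(f) \ge \sens(f) - O(1)$ you would need for monotone $f$ is not a standard fact you can simply invoke. The repair is short: cite (or prove) $\degtwo(f) \le \log\sps(f)$ for $\degtwo(f)>1$, and handle $\degtwo(f)=1$ as a trivial case.
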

  
As a third application, we observe that~\cref{eq:deg:deg2} implies an improved upper bound for influence (denoted by $\I[f]$, see~\cref{sec:prelims} for a definition) to $\degtwo(f)^2 \cdot \alt(f)$. This improves the result of Guo and Komargodski~\cite{GK17} who showed that $\I[f] = O(\alt(f) \sqrt{n})$, thus giving faster learning algorithms for functions of bounded alternation in the PAC learning model.

\section{Preliminaries}
\label{sec:prelims}
We introduce the notations and definitions used in this paper. All logarithms are to the base $2$ unless otherwise stated.
Let $[n] \defn \set{1,2,\dots,n}$. For $i \in [n]$, define $e_i$ to be the $n$ bit Boolean string with  one in $i^{th}$ location and zero elsewhere. 
A Boolean function $f:\zon \to \zo$ is \emph{monotone} if 
$\forall x,y \in \zon$, $x
\prec y \implies f(x) \le f(y)$, where $x \prec y$ iff $\forall i \in [n], x_i \le y_i$. The 
\emph{alternation} of a Boolean function is a measure of non-monotonicity of the Boolean 
function. More precisely, if we
define a collection of distinct inputs $x_0,x_1,x_2 \dots, x_n \in \zon$ satisfying $0^n = x_0 
\prec x_1 \prec x_2 \prec \dots \prec x_n = 1^n$ as a
\emph{chain} in the Boolean hypercube $\calB_n$ then, alternation of $f$ (denoted by $\alt(f)
$) is defined as  $\max\set{\alt(f,\calC) \mid
\calC \text{ is a chain in } \calB_n}$ where 
$\alt(f, \calC)$) is $|\{i \mid f(x_{i-1}) \ne f(x_{i}), x_i \in$
$\calC, i \in [n]\}|$. Indeed, for a monotone $f$, $\alt(f) = 1$. 

By definition, any chain $\calC$ of a Boolean hypercube $\calB_n$ is maximal and hence is uniquely determined by a permutation 
$\sigma \in S_n$ and vice versa. An	$x \in \zon$ belongs to a chain $\calC$ 
defined by $\sigma \in S_n$ iff $x=0^n$ or $ x = \bigvee_{i=1}^{wt(x)} e_{\sigma(i)} $ where the OR  
is taken coordinate wise and $wt(x)$ is the number of ones in $x$. If a chain is defined using a permutation $\sigma$, 
we use $\sigma$ to denote the chain $\calC$.	

For a Boolean function $f$ on $m$ variables and $g$ on $n$ variables, we denote $f \circ g$ 
as a function on $mn$ variables $\set{x_{11},\dots, x_{mn}}$ defined as $f(g(x_{11},
\dots,x_{1n}),$ $ g(x_{21},\dots,x_{2n}),\dots, g(x_{m1},\dots,$ $x_{mn}))$. We define 
$g^{\circ k}$ as the Boolean function on $n^k$ variables as  $g^{\circ (k-1)}\circ g$ for $k 
> 1$ and $g$ for $k=1$.

Given a Boolean function, there always exists a unique $n$ variable multilinear
polynomial over $\R[x_1,\ldots,x_n]$ such that the evaluation agrees with the 
function on $\zon$. The \emph{degree} of function $f$ is the degree of such a 
polynomial (denoted by $\deg(f)$). If we consider the polynomial to be over $\Z_m[x_1,\ldots,x_n]$ for an integer $m > 1$ instead,  we get the \emph{$\Z_m$-degree} of $f$ (denoted by $\degm(f)$).  If $m=2$, it is the $\F_2$-degree of $f$ (denoted by $\degtwo(f)$).

A deterministic Boolean \textit{decision tree} is a rooted tree where the leaves are labeled $0$ or $1$ and non-leaf nodes labeled by a variable having two outgoing edges (corresponding to the value taken by the variable). A decision tree is said to compute a Boolean function $f$, if for all inputs $x$, the path from root to the leaf determined by $x$ is labeled $f(x)$. Define $\DT(f)$ as the depth of the smallest depth decision tree computing $f$.

For an $x,y \in \zon$, we denote by $x
\oplus y$, the input obtained by taking bitwise parity of $x$ and $y$. For $B \sse [n]$, 
$e_B$ denotes the characteristic vector of $B$.  For $f:\zon \to \zo$ and $x
\in \zon$, define the sensitivity of $f$ on $x$ (denoted by $\sens(f,x)$) as $\card{i \mid f(x\oplus e_i) \ne f(x), i \in [n]
}$. We define the block sensitivity of $f$ on $x$ (denoted by $\bsens(f,x)$) as the size of maximal collection of disjoint non-empty sets
$\{B_i\}$ where each $B_i \sse [n]$ in the collection satisfy $f(x \oplus e_{B_i}) \ne f(x)$.
\emph{Certificate complexity} of $f$ on input $x$ (denoted by $\cert(f,x)$) is the
size of the smallest certificate $S \sse [n]$ such that $\forall y \in \zon$
with $y|_S = x|_S \implies f(y) = f(x)$. 
The \emph{sensitivity} of $f$ (denoted by $\sens(f)$) is defined as $\max_{x \in \zon} \sens(f,x)$. The \emph{influence} of a Boolean function $f$ (denoted by $\I[f]$) is defined as $\mathbf{E}_{x \in \zon} [s(f,x)]$. The \emph{block sensitivity} of $f$ (denoted by $\bsens(f)$) is $\max_{x \in \zon} \bsens(f,x)$. Note that $\I[f] \le \sens(f) \le \bsens(f)$. It is also known that $\I[f] \le \deg(f) \le \DT(f)$~(see for instance \cite{BW02}).

For $x \in \zon$ and $S \sse [n]$, define $\chi_S(x) = (-1)^{\sum_{i \in S} x_i}$.
Any $f:\zon \to \pmo$ can be uniquely expressed as $\sum_{S \sse [n]} \fhat(S) 
\chi_S(x)$  where $\fhat(S) \in \R$, indexed by $S \sse [n]$, denotes the \emph{Fourier coefficients} of $f$ which is $\frac{1}{2^n}\sum_x f(x)\chi_S(x)$ (see Theorem 1.1,  \cite{OD14} for more details). The \emph{sparsity} of a Boolean 
function $f$ (denoted by $\sps(f)$) is the number of non-zero Fourier coefficients of 
$f$. 
Observe that we have defined sparsity of a Boolean function only when the range of the function is $\pmo$. For Boolean functions $f$ whose range is $\set{0,1}$, we define sparsity of $f$ to be the sparsity of the function $1-2f$ (whose range is $\pmo$).  

\section{Alternation vs Sensitivity and Alternation vs Logarithm of Sparsity}
In this section, we show that there exists a family of function $\calF = \set{f_k \mid k \in \N}$ with $\alt(f_k)$ is at least exponential in $\sens(f_k)$, $\DT(f_k)$ and $\log \sps(f_k)$ respectively (\cref{sec:gap:lb}). Complementing this, we show that for any Boolean function $f$, $\alt(f)$ can be at most exponential in $\DT(f)$ (\cref{sec:gap:ub}). We prove a bound on the alternation of composed Boolean functions and use it to obtain a family of functions with super-linear gap between alternation and sensitivity with large decision tree depth unlike functions in $\calF$ (\cref{sec:compose}).
\subsection{Exponential Gaps : Alternation vs Decision Tree Depth} \label{sec:gap:lb}
We prove~\cref{gap:exp} in this section. We first show that there exists a family of function $\calF = \set{f_k \mid k \in \N}$ with $
\alt(f_k)$ equals $2^{\DT(f_k)}-1$ (\cref{newgap:alt:dt}). Since for any 
Boolean function $f$, $\sens(f) \le \DT(f)$ (see for instance~\cite{BW02}), we have, $\alt(f_k) = 2^{\DT(f_k)} -1\ge 2^{\sens(f_k)}-1$ and
since for any Boolean function $f$, $\log \sps(f) \le 2\deg(f) \le 2\DT(f)$ (for a proof, see Corollary 2.8, ~\cite{T14} and Proposition 3.16, \cite{OD14} respectively),  we get that for 
$f_k$, $\alt(f_k) = 2^{\DT(f_k)}-1 \ge 2^{0.5\log \sps(f_k)}-1$ thereby proving~\cref{gap:exp}.

Hence, one cannot hope to show that for all Boolean functions $f$, alternation is upper bounded 
polynomially by sensitivity of $f$ or polynomially by logarithm of sparsity of $f$. 
We now define our family $\calF$ of Boolean functions.

\begin{definition}\label{def:gap-fun}
Let $\calF = \set{f_k\mid k \in \N}$ be a family of Boolean functions where for every $k 
\in \N$,  $f_k:\zo^{2^k-1} \to \zo$ is defined by the decision 
tree which is a full binary tree of depth $k$ with each of the $2^k-1$ internal node 
querying a distinct variable and each of the nodes at level $k$ have left leaf child 
labeled $0$ and right leaf child labeled $1$.
\end{definition}
By definition, $\DT(f_k) \le k$. In fact, we claim that $\DT(f_k) = k$. To derive this, it suffices to argue that $\deg(f_k) \ge k$ since $\DT(f) \ge \deg(f)$. The degree lower bound can be seen as follows. Recall from the preliminaries~(\cref{sec:prelims}) that the real representation of any Boolean function is unique. Hence, it suffices to show that there is a real representation of $f$ with a monomial of size $k$ with a non-zero coefficient.  A standard way to obtain a real representation is to interpolate the values of $f$ on the $2^n$ inputs (Section 1.2 in \cite{OD14}). Let $x_i$ be the left most variable at depth $k$ in the decision tree computing $f_k$. In the interpolation, the monomial containing $x_i$ of length $k$ cannot get canceled by another monomial since $x_i$ does not appear elsewhere in the decision tree. Hence $\deg(f_k) \ge k$ and this completes the argument implying that $\deg(f_k) = \DT(f_k) = k$.\\

A Boolean function $f_3 \in \calF$ is described using a decision tree in~\cref{fig:ex}. We remark that the same family of function has also been used in Gopalan \etal~\cite{GSW16} to show an exponential lower bound on tree sensitivity (which they introduce as a generalization of sensitivity) in terms of decision tree depth. Note that, in general, lower bound on tree sensitivity need not imply a lower bound on alternation. For instance, the tree sensitivity of Boolean majority function on $n$ bits is $\Omega(n)$ while alternation is just $1$. 

\begin{figure}[htp!]
    \begin{subfigure}[t]{.49\textwidth}
    \centering
	\begin{tikzpicture}[sibling distance=3.8cm, level 2/.style={sibling distance
	=2cm}, level 3/.style={sibling distance=1.5cm},scale=0.75]
	\node[draw, circle] {$x_1$}
	child{ node[circle, draw] {$x_2$} 
		child { node[circle, draw] {$x_4$}
			child { node {$0$} }
			child { node {$1$} }
		}
		child { node[circle, draw] {$x_5$}
			child { node {$0$} }
			child { node {$1$} }
		}
	}
	child{ node[circle, draw] {$x_3$} 
		child { node[circle, draw] {$x_6$}
			child { node {$0$} }
			child { node {$1$} }
		}
		child { node[circle, draw] {$x_7$}
			child { node {$0$} }
			child { node {$1$} }
		}
	};
	\end{tikzpicture}
	\caption{Boolean function $f_3 \in \calF$}
	\label{fig:ex}
	\end{subfigure} \hfill
    \begin{subfigure}[t]{.49\textwidth}
    	\centering
    	\includegraphics[scale=0.72]{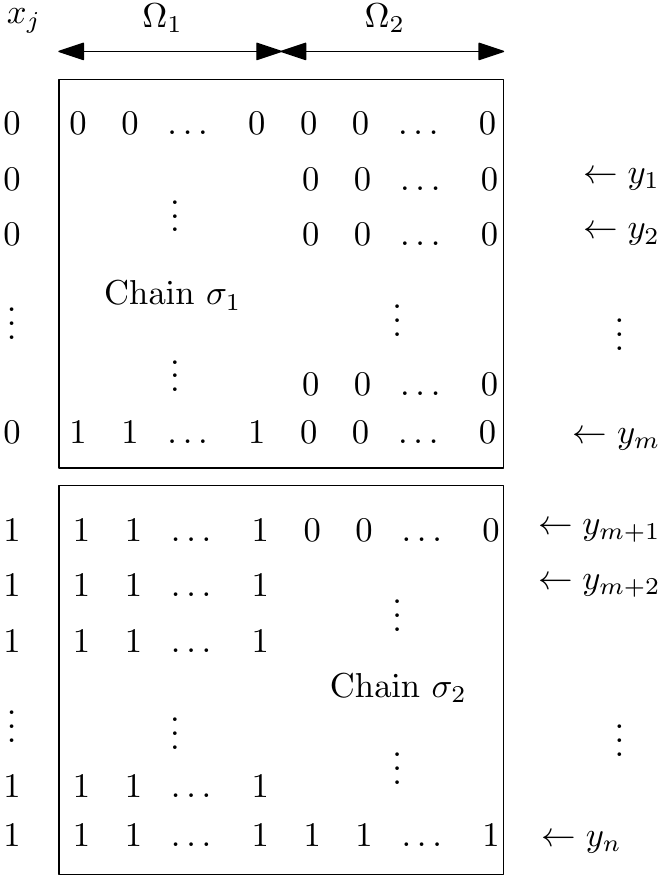}
		\caption{The chain $\sigma$ constructed in the proof of Theorem~\ref{newgap:alt:dt}. 
		Note that $\Omega_1$ and $\Omega_2$ need not be contiguous.}
		\label{fig:proof}
	\end{subfigure}  
	\caption{Boolean function family exhibiting large alternation in Proof of~\cref{newgap:alt:dt}}
\end{figure} 
\begin{theorem}\label{newgap:alt:dt}
For every $k \ge 1$, $f_k \in \calF$, $\alt(f_k) = 2^{\DT(f_k)}-1$.
\end{theorem}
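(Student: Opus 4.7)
The plan is to prove $\alt(f_k) = 2^k - 1$ by combining an easy upper bound with an inductive chain construction for the lower bound. For the upper bound, $f_k$ has $n_k = 2^k - 1$ variables, so any maximal chain in $\calB_{n_k}$ consists of $n_k + 1$ inputs and supports at most $n_k = 2^k - 1$ alternations of any Boolean function. This already gives $\alt(f_k) \le 2^k - 1$ without any use of the decision-tree structure.

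For the lower bound, I would induct on $k$ and build a chain $\sigma_k$ that alternates $2^k - 1$ times while maintaining the invariant $f_k(0^{n_k}) = 0$ and $f_k(1^{n_k}) = 1$ at the two endpoints. The base case $k = 1$ is immediate since $f_1(x_1) = x_1$, and the endpoint invariant for arbitrary $k$ follows from the definition of the tree: the leftmost root-to-leaf path lands on a leaf labeled $0$ and the rightmost on a leaf labeled $1$. For the inductive step, decompose $f_k(x_1, L, R) = (1-x_1)\, f_{k-1}(L) + x_1\, f_{k-1}(R)$, where $L$ and $R$ are the variable sets of the left and right depth-$(k-1)$ subtrees, each of size $n_{k-1} = 2^{k-1} - 1$. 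Given by induction chains $\sigma_L, \sigma_R$ on these variable sets each producing $2^{k-1} - 1$ alternations of $f_{k-1}$, I would assemble $\sigma_k$ in three stages aligned with the regions $\Omega_1$ and $\Omega_2$ in \cref{fig:proof}: (i) traverse $\sigma_L$ in the $L$-coordinates with $x_1 = 0$ and $R = 0^{n_{k-1}}$, contributing $2^{k-1} - 1$ alternations and ending with $f_k = 1$; (ii) flip $x_1$ from $0$ to $1$, changing the value from $f_{k-1}(1^{n_{k-1}}) = 1$ to $f_{k-1}(0^{n_{k-1}}) = 0$ for one extra alternation; (iii) traverse $\sigma_R$ in the $R$-coordinates with $x_1 = 1$ and $L = 1^{n_{k-1}}$ for another $2^{k-1} - 1$ alternations of $f_k$. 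Summing gives $2(2^{k-1}-1) + 1 = 2^k - 1$, matching the upper bound.

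The main thing to verify carefully is that the single-bit transition in stage (ii) actually contributes an alternation rather than losing one. This reduces to $f_{k-1}(0^{n_{k-1}}) \ne f_{k-1}(1^{n_{k-1}})$, which is exactly the endpoint-value invariant carried forward by the induction; together with the alternation counts furnished by $\sigma_L$ and $\sigma_R$, this glues the three stages together without double-counting. I do not expect any deeper obstacle, as the construction is purely combinatorial and the upper bound is trivial once one notes that a chain has only $n_k$ consecutive pairs.
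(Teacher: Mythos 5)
Your proof is correct and takes essentially the same approach as the paper's: an induction on $k$ that concatenates the two subtrees' maximum-alternation chains around a single flip of the root variable, using $f_{k-1}(1^{n_{k-1}}) = 1 \ne 0 = f_{k-1}(0^{n_{k-1}})$ (the leftmost leaf is $0$, the rightmost is $1$) to gain the extra alternation, for a total of $2(2^{k-1}-1)+1 = 2^k-1$, matched by the trivial upper bound $\alt(f_k) \le n_k = 2^k-1$.
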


\begin{proof}
By definition, $f_k$ is computed by a decision tree of depth $k$. 
We show that for $k \ge 1$, $\alt(f_k) \ge 2^k-1$. Since $f_k$ is defined on $2^k-1$ variables, we get that $\alt(f_k) = 2^k-1$ thereby completing the proof. 

Since we need to work with functions whose variable set is not necessarily numbered from $1$ to $n$, we associate bijections (instead of permutations) with chains. For any set $\Omega$, let $B_{\Omega}$ be defined as $\set{ \sigma : [|\Omega|] \to 
\Omega \mid \sigma \text{ is a bijection} }$. For a Boolean function $f$ defined on the variables $\set{x_{i_1},x_{i_2},\ldots,x_{i_n} }$, let $var(f)$ be defined as $\set{i_1,i_2, \ldots, i_n}$.

We now show that $\alt(f_k) \ge 2^k-1$ by induction on $k$. For $k=1$, since $f$ depends only on $1$ variable the result holds. 

Suppose that the result holds for $f_k \in \calF$. 
For $f_{k+1} \in \calF$ on $n=2^{k+1}-1$ variables $\set{x_1,x_2,\ldots,x_n}$, let $T$ be the decision tree (as in~\cref{def:gap-fun}) 
computing $f_{k+1}$ of depth $k+1$ with $x_j$ being the root variable  for some $j \in [n]$. Let $T_1$ and 
$T_2$ be the left and right subtree of the root node $x_j$. Consider the Boolean 
function $f_1$ (resp. $f_2$) computed by the decision tree $T_1$ (resp. $T_2$). Note 
that since $T_1$ and $T_2$ are obtained from $T$ in this way, by~\cref{def:gap-fun}, 
$f_1$ and $f_2$ belongs to $\calF$ and computes the same function up to variable 
renaming. Also note that both $f_1$ and $f_2$ are on $m=2^k-1$ variables. Since both $T_1$ and $T_2$ are of depth $k$, by inductive hypothesis, $
\alt(f_1) \ge m$ and $\alt(f_2) \ge m$. Using this, we now construct a chain for $f_{k+1}$ of alternation $2^{k+1}-1$.

Let $\Omega_1 = var(f_1)$, $\Omega_2 = var(f_2)$. Let $\sigma_1 \in 
B_{\Omega_1}$ and $\sigma_2 \in B_{\Omega_2}$ be such that $\alt(f_1,\sigma_1) = 
2^k-1$ and $\alt(f_2,\sigma_2) = 2^k-1$. We now define a $\sigma \in B_{var(f)} = 
B_{\Omega_1 \cup \Omega_2 \cup \set{j}}$ as  
$\sigma(i) = \sigma_1(i)$  if  $ i \in \set{1,2,\ldots,m}$, 
$\sigma(m+1)  = j$ and $\sigma(m+1+i)  = \sigma_2(i)$  if $ i \in \set{1,2,\ldots,m}$.
By definition, $\sigma$ is indeed a bijection. The $\sigma$ obtained is pictorially represented in~\cref{fig:proof} for clarity of exposition.

We now claim that $\alt(f,\sigma) = 2^{k+1}-1$. To show this, consider the 
chain corresponding to $\sigma$ given by $0^n \prec y_1 \prec \ldots \prec y_m \prec y_{m+1} \prec 
\ldots \prec y_n = 1^n$ all belonging to $\zon$. By definition of $\sigma$, for $i \in 
[m]$ since $j^{th}$ bit of $y_i$  is $0$, $f(y_i) = f_1(y_i|
_{\Omega_1})$ and for $i \in \set{0} \cup [m]$, since $j^{th}$ bit of $y_{m+1+i}$ is $1$, $f(y_{m+1+i}) = f_2(y_{m
+1+i}|_{\Omega_2})$. Again by definition of $\sigma$, 
for $i \in [m]$, $0^m$ along with the elements $y_i|
_{\Omega_1}$ for $i \in [m]$ (in that order) is a chain 
witnessing $f_1$ alternating $m$ times and $y_{m
+1+i}|_{\Omega_2}$ for $i \in \set{0} \cup [m]$ (in that 
order) is a chain witnessing $f_2$ alternating $m$ 
times. Now observe that $f(y_m) \ne f(y_{m+1})$. This is 
because $f(y_m)$ is $f_1$ evaluated on $x_i = 1$ for all 
$i \in \Omega_1$ is the rightmost child of $T_1$ 
which is $1$ and $f(y_{m+1})$ is $f_2$ evaluated on $x_i 
= 0$ for all $i \in \Omega_2$  is the leftmost 
child of $T_2$ which is $0$. Hence $\alt(f, \sigma)  = 
\alt(f_1,\sigma_1) + \alt(f_2,\sigma_2) + 1 = 2m+1 = 
2^{k+1}-1$ completing the induction.  
\end{proof} 
In the next section, we show that for any Boolean function $f$, we can indeed upper bound 
alternation of $f$ by an exponential in decision tree depth of $f$.

\subsection{Alternation is at most Exponential in Decision Tree Depth }\label{sec:gap:ub}
In this section, we show that for all Boolean functions $f$, $\alt(f) \le 2^{\DT(f)+1}-1$. 
Markov~\cite{Mar58} studied a parameter 
closely related to $\alt(f)$ defined as \emph{decrease} (denoted by $\dc(f)$) where 
the definition is same as alternation except that the flips in the chain 
from $1$ to $0$ (corresponding to a decrease in the function value) alone are counted. Hence for any Boolean function $f$, $\alt(f) \in 
\set{2\dc(f)-1,2\dc(f) , 2\dc(f)+1}$. Markov~\cite{Mar58} showed the following tight connection between 
negations needed to compute a Boolean function and its decrease.

\begin{theorem}[Markov~\cite{Mar58}]\label{markov}
Let $\negs(f)$ be the minimum number of negations needed in any circuit computing $f$. Then, for any $f:\zon \to \zo$, $\negs(f) = \lceil \log (1 + \dc(f))\rceil$.
\end{theorem}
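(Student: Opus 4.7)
Since this is Markov's classical 1958 result, the plan is to reconstruct both directions of the equality independently, with quite different arguments.

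For the lower bound $\negs(f) \ge \lceil \log(1+\dc(f))\rceil$, I would show that if $f$ is computed by a circuit with $k$ NOT gates then $\dc(f) \le 2^k - 1$. Fix an arbitrary chain $\calC$ in the Boolean cube and, for every gate $g$ in the circuit, track both the number of descents $d(g)$ and the number of ascents $a(g)$ that $g$ exhibits along $\calC$. The elementary gate-level observations are: input variables and constants have $d = 0$ and $a \le 1$; AND/OR gates have $d$ and $a$ subadditive across their inputs; a NOT gate simply swaps $a$ and $d$. To convert these local observations into the global bound $\dc(f) \le 2^k - 1$ I would first restructure the circuit into Markov's layered normal form, in which all NOT gates appear in $k$ successive levels with each intervening layer monotone. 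Each negation layer at most doubles the current descent count (descents become ascents and vice-versa, and the two quantities are combined across sibling subcircuits), and the intervening monotone layers preserve the count by subadditivity; starting from $\dc = 0$ at the input layer, $k$ doublings yield $\dc \le 2^k - 1$.

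For the upper bound $\negs(f) \le \lceil \log(1 + \dc(f))\rceil$, set $d = \dc(f)$ and $k = \lceil \log(1 + d)\rceil$. The plan is a recursive halving construction. The key combinatorial claim is that for any $f$ of decrease $d$, there exist a monotone function $z : \zon \to \zo$ and two Boolean functions $h_0, h_1 : \zon \to \zo$, each of decrease at most $\lfloor d/2 \rfloor$, satisfying $f = (h_0 \land \neg z) \lor (h_1 \land z)$. Concretely, $z$ is taken as the indicator of the upward closure of all inputs at which the chain-descent counter of $f$ has already crossed $\lceil d/2 \rceil$; monotonicity follows from the chain structure. Applying the recursion to $h_0$ and $h_1$ produces two subcircuits of negation count $k - 1$, and only one further NOT is needed at the top for $z$, giving the recurrence $N(d) \le 1 + N(\lfloor d/2 \rfloor)$ with solution $N(d) \le k$.

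The main obstacle is establishing the halving decomposition: exhibiting the monotone $z$ and the decrease-halved $h_0, h_1$ with the claimed properties, and verifying that $z$ really only needs to be monotone (so that no further negations are consumed in its construction). Once this combinatorial lemma is proved, the two inequalities combine to give the exact identity $\negs(f) = \lceil \log(1 + \dc(f))\rceil$ claimed in the theorem.
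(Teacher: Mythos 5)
The paper does not prove this statement; it is quoted as Markov's classical theorem (with the connector lemma, Claim~10.11 of Jukna, cited alongside it), so your reconstruction has to stand on its own. The lower-bound half is structurally sound: tracking ascents and descents per gate along a chain, with AND/OR subadditive and NOT swapping the two counts, is the standard argument. One slip: ``each negation layer at most doubles the descent count, starting from $0$'' yields $0$ forever. The correct recurrence is $d \mapsto 2d+1$: for a $0/1$ sequence the numbers of ascents and descents differ by at most one, so a gate with $\le d$ descents has $\le d+1$ ascents, and negating turns those into $\le d+1$ descents; combined with subadditivity over the earlier negation gates this gives $d(g_i) \le 1 + \sum_{j<i} d(g_j) \le 2^{i-1}$ and hence $\dc(f) \le 2^k-1$. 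This is repairable.

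The upper bound has a genuine gap. Your decomposition $f = (h_0 \land \neg z) \lor (h_1 \land z)$ with \emph{two separate subcircuits} for $h_0$ and $h_1$, each using $k-1$ negations, costs $2(k-1)+1$ negations, so the recurrence you actually get is $N(d) \le 1 + 2N(\lfloor d/2\rfloor)$, which solves to $N(d) = O(d)$, not $\lceil\log(1+d)\rceil$. The recurrence $N(d) \le 1 + N(\lfloor d/2\rfloor)$ that you write down requires computing $h_0$ and $h_1$ by a \emph{single} circuit that shares its negations between the two branches. That is precisely Markov's connector lemma: for any $g(b_0,b_1,x)$ with $g(0,1,x)=f_0(x)$ and $g(1,0,x)=f_1(x)$ one has $\negs(g) \le \max\{\negs(f_0),\negs(f_1)\}$, and this lemma has its own nontrivial inductive proof (interleaving the negation gates of the two circuits). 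The paper explicitly flags the connector as ``a crucial idea used by Markov in proving the above theorem,'' and it is exactly the ingredient your proposal omits; without it the halving scheme does not yield a logarithmic negation count. The construction of the monotone threshold $z$ from the chain-descent counter, and the extension of $f$ restricted to $z^{-1}(0)$ and $z^{-1}(1)$ to total functions of decrease $\le \lfloor d/2\rfloor$, are the fiddly-but-standard parts you correctly identify as remaining work; the connector is the part you cannot skip.
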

We also use the notion of a \emph{connector} for two Boolean functions which is a crucial idea used by Markov in proving the above theorem. A connector  of two Boolean functions $f_0$ and $f_1$ is a function $g(b_0,b_1,x)$ on $n+2$ bits such that $g(0,1,x) = f_0(x)$ and $g(1,0,x) = f_1(x)$. Markov showed the following remarkable bound that $\negs(g) \le \max\{\negs(f_0), \negs(f_1)\}$ (for a proof, see Claim 10.11 of Jukna~\cite{Juk12}). We now prove our result which follows from an inductive application of~\cref{markov}.

\begin{theorem}\label{alt:dt}
For any $f:\zon \to \zo$, $ \alt(f) \le 2^{\DT(f)+1}-1$.
\end{theorem}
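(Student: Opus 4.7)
The plan is to reduce the claim to a bound on $\negs(f)$ via Markov's theorem and to prove that bound by a straightforward induction on the decision tree depth using the connector construction already introduced in the excerpt.

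First I would translate the goal into a statement about negations. From the excerpt we have $\alt(f) \le 2\dc(f)+1$, and from \cref{markov}, $\negs(f) = \lceil \log(1+\dc(f))\rceil$, which rearranges to $\dc(f) \le 2^{\negs(f)} - 1$. Hence, if I can establish
\[
  \negs(f) \;\le\; \DT(f),
\]
then $\dc(f) \le 2^{\DT(f)} - 1$ and consequently $\alt(f) \le 2\bigl(2^{\DT(f)} - 1\bigr) + 1 = 2^{\DT(f)+1} - 1$, which is exactly the desired bound. So the whole problem reduces to bounding $\negs(f)$ by $\DT(f)$.

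Next I would prove $\negs(f) \le \DT(f)$ by induction on $d = \DT(f)$. The base case $d = 0$ is immediate: $f$ is constant and $\negs(f) = 0$. For the inductive step, suppose the bound holds for all Boolean functions computable by decision trees of depth at most $d-1$, and let $f$ have an optimal decision tree of depth $d$ whose root queries a variable $x_i$. Setting $f_0(x) = f(x)|_{x_i=0}$ and $f_1(x) = f(x)|_{x_i=1}$, both $f_0$ and $f_1$ are computed by decision trees of depth at most $d-1$ (the two subtrees of the root), so by induction $\negs(f_0), \negs(f_1) \le d-1$. Now observe that $f$ can be written as $f(x) = g(x_i, \overline{x_i}, x)$, where $g(b_0, b_1, x)$ is a connector for $f_0$ and $f_1$ (e.g.\ $g(b_0,b_1,x) = b_0 f_1(x) \vee b_1 f_0(x)$, which satisfies $g(0,1,x) = f_0(x)$ and $g(1,0,x) = f_1(x)$). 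By the connector property of Markov recalled in the excerpt, $\negs(g) \le \max\{\negs(f_0), \negs(f_1)\} \le d-1$. Computing $f$ from $g$ requires one additional negation to produce $\overline{x_i}$ from $x_i$, so $\negs(f) \le \negs(g) + 1 \le d$, completing the induction.

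Combining the two pieces yields $\alt(f) \le 2^{\DT(f)+1} - 1$, matching (up to a factor of $2$) the lower bound of \cref{newgap:alt:dt}. The main conceptual step is recognizing that an optimal decision tree naturally exhibits $f$ as an input-controlled selection between two subfunctions, which is exactly the setup in which Markov's connector bound saves one negation at each level; everything else is routine. I do not anticipate a serious obstacle, since both Markov's theorem and the connector bound are already cited verbatim in the excerpt and the induction on depth is short.
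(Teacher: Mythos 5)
Your proof is correct and follows essentially the same route as the paper's: split on the root variable of an optimal decision tree, apply Markov's connector bound to the restrictions $f_0,f_1$ to get $\negs(f)\le \DT(f)$, and convert back to alternation via $\dc(f)\le 2^{\negs(f)}-1$ and $\alt(f)\le 2\dc(f)+1$. Your organization (isolating $\negs(f)\le\DT(f)$ as the inductive statement rather than carrying $\dc$ through the induction) is a slightly cleaner packaging of the identical argument.
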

\begin{proof}
Since $\alt(f) \le 2\dc(f) + 1$, it suffices to show that $\dc(f) \le 2^{\DT(f)}-1$. 
Proof is by induction on $\DT(f)$. 
For $\DT(f) = 1$, the function depends on at most $1$ variable, giving $\dc(f) \le 1 = 2^{\DT(f)}-1$. 
For any $f$ with $\DT(f) \ge 2$ computed by a decision tree $T$ of depth $k$, let $x_i$ be the variable 
queried at the root of $T$ for some $i \in [n]$. Define $f_0$ as $f$ restricted to $x_i=0$ and $f_1$ as 
$f$ restricted to $x_i=1$. Removing the node $x_i$ from $T$ gives two decision trees 
which computes $f_0$ and $f_1$ (respectively) giving $\DT(f_0) \le \DT(f)-1$ and $\DT(f_1) \le 
\DT(f)-1$. Hence by induction, $\dc(f_0) \le 2^{\DT(f_0)}-1 \le 2^{\DT(f)-1}-1$ and 
similarly, $\dc(f_1) \le 2^{\DT(f)-1}-1$.

Applying~\cref{markov}, we get, $\negs(f_0) = \lceil \log (\dc(f_0)+1)\rceil$ 
(and similarly for $f_1$). Let $g$ be 
the connector for $f_0$ and $f_1$. Since $f(x) = x_i \land f_1(x) \vee \neg x_i 
\land 
f_0(x)$, $f(x) = g(\neg x_i, x_i,x)$. Applying Markov's result on the number of 
negations needed in computing $g$, we get $\negs(f) \le \negs(g)+1 \le 1+\max
\{\negs(f_0), \negs(f_1)\}$. Since, $\max\{\negs(f_0), \negs(f_1)\} \le \left 
\lceil \log (2^{\DT(f)-1}-1+1 ) \right \rceil$ which is $\DT(f)-1$, $\negs(f) \le \DT(f)$. Applying~\cref{markov}, on $f$ completes the induction.  
\end{proof}
Note that for the family of functions $\calF$,~\cref{newgap:alt:dt} shows that the above result is asymptotically tight.
In~\cref{app:alt-dt}, we present a simpler proof of a slightly weaker result (in terms of an additive constant) that for all $f$, $\alt(f) \le 2^{\DT(f)+1}+1$. 

\begin{appsection}{Second Proof of Exponential Upper Bound on Alternation}{app:alt-dt}
Nechiporuk~\cite{Neci62}, also discovered independently by Morizumi~\cite{M09}, related the decrease of a function $f$ to the minimum number of negations needed in any formula computing $f$. 

\begin{theorem}[Nechiporuk, Morizumi] \label{negs:mori}
Let $\negs_F(f)$ be the minimum number of negations needed in any formula computing $f$. Then $\negs_F(f) = \dc(f)$.
\end{theorem}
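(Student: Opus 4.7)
The plan is to establish $\dc(f) \le \negs_F(f)$ and $\negs_F(f) \le \dc(f)$ separately.

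For $\dc(f) \le \negs_F(f)$, I would proceed by structural induction on any formula $F$ computing $f$, with the invariant that for every sub-formula $g$ and every chain $C$ in $\calB_n$, $\dc(g, C)$ is at most the number of NOT gates in the sub-formula. At a leaf both sides are zero. At an AND/OR gate $g_1 \text{ op } g_2$, a $1 \to 0$ flip of the output along $C$ forces a $1 \to 0$ flip of at least one input, so $\dc(g_1 \text{ op } g_2, C) \le \dc(g_1, C) + \dc(g_2, C)$; since in a formula the two sub-formulas share no gates, the NOT counts simply add. At a $\neg g$ gate, each $1 \to 0$ flip of $\neg g$ along $C$ is a $0 \to 1$ flip of $g$, and the telescoping identity
\[
\#(0 \to 1 \text{ flips of } g \text{ on } C) - \#(1 \to 0 \text{ flips of } g \text{ on } C) \;=\; g(1^n) - g(0^n) \;\in\; \set{-1, 0, 1}
\]
gives $\dc(\neg g, C) \le \dc(g, C) + 1$, matching the single additional NOT. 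Taking the maximum over $C$ and the minimum over formulas yields $\dc(f) \le \negs_F(f)$.

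For $\negs_F(f) \le \dc(f)$, I would construct a formula with exactly $\dc(f)$ NOT gates by induction on $d = \dc(f)$. The base $d = 0$ is immediate: a function with $\dc(f) = 0$ is monotone (or constant), so it has a monotone DNF representation with no negations. The inductive step aims at a decomposition
\[
f \;=\; m_1 \vee (m_2 \wedge \neg h),
\]
where $m_1, m_2$ are monotone and $h$ is a Boolean function with $\dc(h) \le d - 1$. The inductive hypothesis then supplies a formula for $h$ with $d - 1$ NOT gates, and composing with monotone formulas for $m_1, m_2$ plus one new NOT yields $d$ NOT gates in total for $f$.

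My concrete plan for the decomposition is to take $m_1$ to be the largest monotone function pointwise below $f$ (namely $m_1(x) = 1$ iff $f \equiv 1$ on the upper cone of $x$) and $m_2$ to be the smallest monotone function pointwise above $f$ (the upward closure of $\set{f = 1}$); these force $h = \neg f$ on the intermediate region $\set{m_2 = 1} \setminus \set{m_1 = 1}$ but leave $h$ free on $\set{m_1 = 1}$ and on $\set{m_2 = 0}$. The main obstacle is extending $h$ over the entire cube so that $\dc(h) \le d - 1$ on every chain simultaneously. I would prove this by a chain-by-chain case analysis that uses the monotonicity of $m_1, m_2$ together with the boundary-parity identity from the lower bound direction, arguing that each decrease of the extended $h$ can be traced to a strictly higher-indexed decrease of $f$ and that at least one decrease of $f$ is absorbed by the monotone gap between $m_1$ and $m_2$. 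Once such an extension is in hand, the induction closes and yields a formula for $f$ with exactly $d$ NOT gates.
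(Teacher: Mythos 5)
The paper does not prove this theorem at all -- it is imported as a black box with a citation to Nechiporuk and Morizumi -- so your proposal has to stand on its own. Your lower-bound direction $\dc(f) \le \negs_F(f)$ is complete and correct: the additivity of the NOT-counts at an AND/OR gate uses exactly the tree structure of a formula (which is why the same induction fails for circuits and only gives Markov's logarithmic bound there), and the telescoping identity at a NOT gate correctly charges one decrease to each negation. No complaints there.

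The gap is in the converse direction, and it sits precisely where you park it. The decomposition $f = m_1 \vee (m_2 \wedge \neg h)$ with $m_1$ the largest monotone minorant and $m_2$ the smallest monotone majorant is a workable choice: the cube splits into the down-set $\set{m_2 = 0}$, the up-set $\set{m_1 = 1}$, and a middle region $R$ on which $h$ must equal $\neg f$. But the entire content of the theorem is the claim you defer, namely that $h$ extends to all of $\zon$ with $\dc(h) \le d-1$. With the natural extension ($h \equiv 0$ on the down-set, $h \equiv 1$ on the up-set), the decreases of $h$ along a chain are exactly the increases of $f$ inside the middle segment of that chain, and the boundary-parity identity you invoke bounds this only by $\dc(f)$, not $\dc(f)-1$: for instance, when the chain enters $R$ at a point where $f=1$ and exits $R$ at a point where $f=1$, the count comes out equal to the number of decreases of $f$ on that very chain, which can be $d$. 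The missing idea is that in these boundary cases one must leave the chain: the definition of $m_1$ guarantees an off-chain witness $w$ strictly above the exit point with $f(w)=0$ (and dually $m_2$ gives a witness below the entry point with $f(w)=1$), and rerouting the chain through $w$ produces a chain on which $f$ has one additional decrease, which is what actually forces the middle-segment increase count down to $d-1$. Your sentence about tracing each decrease of $h$ to a ``strictly higher-indexed decrease of $f$'' gestures at this rerouting but does not contain it; as written, the induction does not close, since $\dc(h) \le d$ is all that follows from the on-chain counting.
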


We now give a weaker version (in terms of the additive constant) of~\cref{alt:dt}.
\begin{theorem}
For any Boolean function $f$, $\alt(f) \le 2^{\DT(f)+1}+1$
\end{theorem}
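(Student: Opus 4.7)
The plan is to combine three ingredients: the inequality $\alt(f) \le 2\dc(f)+1$ (which is immediate from the definitions of alternation and decrease, and is mentioned earlier in the excerpt), the Nechiporuk--Morizumi identity $\negs_F(f) = \dc(f)$ from \cref{negs:mori}, and a direct upper bound on $\negs_F(f)$ obtained by converting a decision tree into a formula by the obvious recursion. Once $\negs_F(f) \le 2^{\DT(f)}$ is in hand, substituting back gives $\dc(f) \le 2^{\DT(f)}$ and hence $\alt(f) \le 2 \cdot 2^{\DT(f)} + 1 = 2^{\DT(f)+1} + 1$, as claimed.

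First I would state the inductive claim: for every Boolean function $f$, there is a formula computing $f$ with at most $2^{\DT(f)}$ NOT gates; by \cref{negs:mori} this is the same as $\dc(f) \le 2^{\DT(f)}$. The base case is $\DT(f) = 0$, where $f$ is constant and $\negs_F(f) = 0$. For the induction step, fix a depth-optimal decision tree $T$ computing $f$, let $x_i$ be the variable queried at the root, and let $T_0, T_1$ be its two subtrees, computing the restrictions $f_0$ and $f_1$ of $f$ to $x_i = 0$ and $x_i = 1$ respectively. Both $T_0$ and $T_1$ have depth at most $\DT(f) - 1$, so $\DT(f_0), \DT(f_1) \le \DT(f) - 1$.

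Next I would build the formula for $f$ explicitly. Let $\Phi_0$ and $\Phi_1$ be formulas computing $f_0$ and $f_1$ with $\negs_F(f_0)$ and $\negs_F(f_1)$ NOT gates respectively. Then the formula
\[
\Phi \;=\; \bigl(\neg x_i \wedge \Phi_0\bigr) \vee \bigl(x_i \wedge \Phi_1\bigr)
\]
computes $f$, and uses exactly one new NOT gate (for $\neg x_i$) on top of those in $\Phi_0$ and $\Phi_1$. Hence
\[
\negs_F(f) \;\le\; \negs_F(f_0) + \negs_F(f_1) + 1 \;\le\; 2 \cdot 2^{\DT(f)-1} + 1 \;=\; 2^{\DT(f)} + 1.
\]
A marginally tighter bookkeeping (using $\negs_F(f_0), \negs_F(f_1) \le 2^{\DT(f)-1} - 1$ at depth $\ge 1$, which holds inductively with base $\DT(f)=0 \Rightarrow \negs_F(f)=0$) in fact gives $\negs_F(f) \le 2^{\DT(f)} - 1$, which is more than enough for our target.

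Finally I would put the pieces together: by \cref{negs:mori}, $\dc(f) = \negs_F(f) \le 2^{\DT(f)}$, and since alternation and decrease satisfy $\alt(f) \le 2\dc(f) + 1$ (as noted in the preliminary discussion preceding \cref{markov}), we obtain $\alt(f) \le 2^{\DT(f)+1} + 1$. There is no real technical obstacle here: compared to the proof of \cref{alt:dt}, which invokes Markov's theorem together with the connector trick to save a factor of two on the formula-to-circuit overhead, this proof uses the formula version of Nechiporuk--Morizumi to sidestep connectors entirely, at the cost of an additive $2$ in the final bound. The only minor care needed is to ensure the induction base is handled correctly so that the recursion $T(d) \le 2 T(d-1) + 1$ closes with the constant on the right hand side one wants.
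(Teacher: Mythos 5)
Your proof is correct and follows essentially the same route as the paper's: both reduce the claim to bounding $\dc(f)=\negs_F(f)$ via the Nechiporuk--Morizumi theorem (\cref{negs:mori}) and then bound the number of negations in a formula obtained from an optimal decision tree. Your explicit recursion, giving $\negs_F(f)\le 2\cdot(2^{\DT(f)-1}-1)+1=2^{\DT(f)}-1$, is in fact slightly sharper than the paper's one-shot conversion (which only claims $\negs_F(f)\le 2^{\DT(f)}$), so your bookkeeping even recovers the stronger bound $\alt(f)\le 2^{\DT(f)+1}-1$ of \cref{alt:dt}.
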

\begin{proof}
Since $\alt(f) \le 2\dc(f) + 1$, it suffices to show that $\dc(f) \le 2^{\DT(f)}$ to complete the proof. Let $T$ be a decision tree of depth $\DT(f)$ computing $f$. Given such a decision tree $T$, we can always obtain a formula computing $f$ with at most $2^{\DT(f)}$ leaves. 
Hence, $\negs_F(f)$ is upper bounded by $2^{\DT(f)}$ as all internal negations can be pushed to the leaves. Applying~\cref{negs:mori}, we get that $\dc(f) \le 2^{\DT(f)}$ which completes the proof.
\end{proof}
\end{appsection}
\subsection{Super Linear Gaps Between Alternation and Sensitivity}\label{sec:compose}

In this section, we exhibit a family of functions with a super-linear gap between 
alternation and sensitivity with high decision tree depth. We start by giving a motivation for this study.

In~\cref{sec:gap:lb}, we showed the existence of a family $\calF$ of Boolean functions with alternation at least exponential in sensitivity and alternation is at least logarithm of sparsity. Hence this family $\calF$ rules out the possibility of 
upper bounding alternation by a polynomial sensitivity or a polynomial in logarithm of sparsity for all Boolean functions which would have settled the \textsc{Sensitivity Conjecture} and \textsc{\XOR~Log-Rank Conjecture} by the results of Lin and Zhang~\cite{LZ17}. However, the above mentioned conjectures holds true for the family $\calF$ since they have shallow decision trees computing them.

\noindent
\textbf{\textsc{Sensitivity Conjecture} is true for all $f_k \in \calF$ : } To argue this, we first observe that as $f_k$ depends on all its inputs, $s(f_k) = \Omega(\log n_k)$ \cite{Sim83}. Hence we can conclude that $\DT(f_k) = O(\sens(f_k))$. Since $\deg(f_k) \le \DT(f_k)$ (see~\cite{BW02}), we get $\deg(f_k) = O(\sens(f_k))$. Note that, this argument is valid for any function $f$ that depends on $n$ variables, and $\DT(f) = O(\log n)$.

\noindent
{\bf \textsc{\XOR~Log-Rank Conjecture} is true for all $f_k \in \calF$ : }Let $T_k$ be the decision tree computing $f_k$ (as per~\cref{def:gap-fun}). Consider the Fourier expansion of $f_k:\zo^{n_k} \to \pmo$ given by,
\begin{align*}
f_k(x) & =  \sum_{a \in \zo^k} val(\ell_a) \prod_{i_j \in var(\ell_a)} \frac{1+(-1)^{x_{i_j}}(-1)^{a_j}}{2} \\
& = \frac{-1}{2^k} \sum_{a \in \zo^k} val(\ell_a) \sum_{S \sse var(\ell_a)} \chi_S(x)\prod_{j : i_j\in S} (-1)^{a_j}
\end{align*}
where, for the leaf $\ell_a$ of $T_k$, $a\in \zo^k$ is the value taken by the variables with indices $var(\ell_a)= \set{i_1, i_2, \ldots, i_k }$ 
in the path to $\ell_a$ and $val(\ell_a) \in \pmo$ is the value of $f_k$ on the input $a$. For $a,b \in \zo^k$, if $a \ne b$ then $var(\ell_a) \ne 
var(\ell_b)$ as the variable set in each path of $T_k$ are different. Hence  $\chi_S(x)$ for $S=var(\ell_a)$ is never canceled in the above expansion thereby concluding that $\sps(f_k) \ge 2^k$. Since $\DT(f_k) 
= k$, we have $\DT(f_k) \le \log \sps(f_k)$. Using the fact that for any Boolean function 
$f$, $\CC_\oplus(f) \le 2{\sf DT}(f)$ \cite{MO09}, $\CC_\oplus (f_k) = O(\log \sps(f_k))$ 
thereby concluding that \textsc{\XOR~Log-Rank Conjecture} is true for  all $f_k \in \calF$.

As stated before, the key property of $ f \in \calF$ due to which both the conjectures are true is that the decision tree complexity of $f \in \calF$ is logarithmic in the number of variables of $f$. In fact for any $f:\zon \to \zo$ which satisfies $\alt(f) \ge 2^{\DT(f)}$ must have $ \DT(f) = O(\log n)$ as $\alt(f) \le n$.  Hence, we ask if there exists a family of functions $f$ where $\alt(f)$ grows faster than $\sens(f)$ but $\DT(f)$ is not very small. 

\noindent
{\bf Super-linear Gap between Alternation and Sensitivity :} 

In the rest of this section, we answer this question by exhibiting a  family of Boolean functions $\calG = \set{g_k:\zo^{n_k} \to \zo \mid k \in \N}$ with $\alt(g_k) = \omega(\sens(g_k))$ and $\DT(g_k)$ is $\Omega(n_k^{\log_6 3})$. 
Before proceeding, we show a lower bound on the alternation of composition of two Boolean functions in terms of its alternation.

\begin{lemma}\label{alt:composition}
For any $g:\zon \to \zo$, with $g(0^n) \ne g(1^n)$ and any $f:\zo^m \to \zo$,
$ \alt(f \circ g) \ge \alt(f)\cdot \alt(g)$.
\end{lemma}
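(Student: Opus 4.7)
The plan is to build, from optimal chains witnessing $\alt(f)$ and $\alt(g)$, an explicit chain in $\calB_{mn}$ along which $f \circ g$ alternates at least $\alt(f) \cdot \alt(g)$ times. First I would reduce to the case $g(0^n) = 0$ and $g(1^n) = 1$: otherwise, setting $\tilde g(x) = \neg g(x)$ and $\tilde f(y_1, \ldots, y_m) = f(\neg y_1, \ldots, \neg y_m)$ gives $\tilde f \circ \tilde g = f \circ g$, while both $\tilde f$ and $\tilde g$ have the same alternation as $f$ and $g$ respectively (pointwise negation of outputs preserves alternation and complementing all inputs merely reverses chains). Then I would fix permutations $\sigma \in S_m$ and $\tau \in S_n$ realizing $\alt(f,\sigma) = \alt(f)$ and $\alt(g,\tau) = \alt(g)$, with associated chains $0^m = y_0 \prec \cdots \prec y_m = 1^m$ and $0^n = z_0 \prec \cdots \prec z_n = 1^n$, and let $K \sse [m]$ be the set of indices at which $f$ changes value along the $\sigma$-chain, so $|K| = \alt(f)$.

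Next, I would group the $mn$ inputs of $f \circ g$ into $m$ blocks of $n$ and construct a chain $w_0 \prec w_1 \prec \cdots \prec w_{mn}$ in $\calB_{mn}$ that fills the blocks in the order $\sigma(1), \sigma(2), \ldots, \sigma(m)$, each time walking from $0^n$ to $1^n$ inside the block along $\tau$. Formally, at stage $j = (k-1)n + \ell$ with $\ell \in \set{0, 1, \ldots, n}$, block $\sigma(i)$ equals $1^n$ for $i < k$, block $\sigma(k)$ equals $z_\ell$, and block $\sigma(i)$ equals $0^n$ for $i > k$; each step flips exactly one bit (namely, the bit of block $\sigma(k)$ indicated by $\tau(\ell)$), so this is a valid maximal chain from $0^{mn}$ to $1^{mn}$.

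The heart of the argument is the observation that, using $g(0^n) = 0$ and $g(1^n) = 1$, the input to $f$ at $w_{(k-1)n + \ell}$ equals $y_{k-1}$ when $g(z_\ell) = 0$ and $y_k$ when $g(z_\ell) = 1$. Hence during the $k$-th phase (the $n$ transitions from $w_{(k-1)n}$ to $w_{kn}$), the value of $f \circ g$ toggles between $f(y_{k-1})$ and $f(y_k)$ in lockstep with the $\alt(g)$ alternations of $g$ along $\tau$. If $k \in K$ then these two $f$-values differ, so the phase contributes exactly $\alt(g)$ alternations of $f \circ g$; if $k \notin K$, it contributes none. Since the last point of phase $k$ coincides with the first point of phase $k+1$, no alternation is missed or double-counted at phase boundaries, yielding a total of $|K| \cdot \alt(g) = \alt(f) \cdot \alt(g)$ alternations along the chain. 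The step most prone to error is the orientation of $g$: the ``fill blocks in order'' construction crucially uses that each block's $g$-value increases from $0$ to $1$ as the block is filled in, so the WLOG reduction above must be in place; everything else is straightforward bookkeeping.
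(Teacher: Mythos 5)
Your proof is correct and follows essentially the same route as the paper's: the same block-filling chain (fill blocks in the order given by $f$'s optimal permutation, walking each block from $0^n$ to $1^n$ along $g$'s optimal chain), and the same key observation that within phase $k$ the input to $f$ toggles between consecutive elements $y_{k-1}$ and $y_k$ of $f$'s chain in lockstep with the $\alt(g)$ alternations of $g$. Your handling of the reduction to $g(0^n)=0$, $g(1^n)=1$ (complementing $f$'s inputs along with negating $g$ so that $f\circ g$ is literally unchanged) is in fact slightly more careful than the paper's one-line remark.
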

\begin{proof}
Without loss of generality, assume $g(0^n) = 0$ and $g(1^n) = 1$ 
(otherwise work with $\neg g$ as $\alt(g) = \alt(\neg g)$). 
Let $A = (0^n = z_0 \prec z_1 \prec \dots \prec z_n= 1^n)$ be a chain on $\zon$ such that the alternation 
of $g$ is maximum. Consider any maximum alternation chain of $f$ and let $\sigma$ be the 
permutation associated with the chain. We exhibit a chain $B= (y_0, y_1, \ldots, y_{nm})$ on 
$\zo^{nm}$ with $\alt(f)\cdot \alt(g)$ many alternations. The idea is to glue $m$ copies of $A$ which are appropriately separated in a way such that for every flip of $f$, $f \circ g$ flips $\alt(g)$ times in the chain $B$. 

We divide inputs in the chain $B$ into $m$ blocks of size $n$ each. We say that for a $k \in [nm]$, the input $y_k \in \zo^{nm}$ 
belongs to the \emph{block} $b(k)$ if $b(k) = \lceil \frac{k}{n} \rceil$. We define the \emph{position} of $y_k$ in its block, $pos(k)$, as $n$ if $n \mid k$ and  $(k \mod n)$ otherwise. Hence $k = (b(k)-1) \cdot n + pos(k)$. Let 
$y_k = (x_1^k,x_2^k,\dots,x_m^k)$ where $x_i^k \in \zon$. For $k=0$, define $x_i^k = 0^n$ for all $i 
\in [m]$ and for $k=nm$, define $x_i^k = 1^n$ for all $i \in [m]$. For the remaining values of $k$, 
$x_i^k$ for $i \in [m]$ is defined  as
\[ x_i^k = \begin{cases}
		  z_n= 1^n & \text{ if } i \in \set{\sigma(1), \sigma(2), \ldots, \sigma(b(k)-1)} \text{ and } b(k) \ge 2 \\
		  z_{pos(k)} & \text{ if } i = \sigma(b(k)) \\
		  z_0 =0^n  & \text{ otherwise } 
		\end{cases}
\]
We can see that $y_0 = 0^{nm}$ and $y_{nm} = 1^{nm}$ and for $k \in [nm]$, from the above 
definition, $y_{k-1} \prec y_{k}$ as $\forall i \in [m] ~x_i^{k-1} \prec x_i^{k}$. 
We now argue that $f\circ g$ alternates at least $\alt(f) \cdot \alt(g)$ times in the chain 
$B$. Consider the input $(g(x_1^k), g(x_2^k),\ldots, g(x_m^k))$ to the function $f$ and let 
$y_k$ belong to the block $b(k)$.

Consider the case when $b(k)=1$. In this case, all of $x_i^k$ except $i=\sigma(b(k))$ is $0^n$. As 
long as $y_k$ stays within the block $b(k)$, the input the function $f$ changes its value only 
at $x_{\sigma(b(k))}^k = x_{\sigma(1)}^k$. Since $g(x_{\sigma(b(k))}^k)$ changes its value $
\alt(g)$ times, $f\circ g$ will also alternate $\alt(g)$ times if value of $f$ changes on 
flipping location $\sigma(b)$ in its input. 

For $k$ such that $b(k) > 1$, by definition of $y_k$, $x_{\sigma(1)}^k, \ldots, x_{\sigma(b(k)-1)}
^k$ is $1^n$ and $x_{\sigma(b(k)+1)}^k \ldots, x_{\sigma(m)}^k$ is $0^n$. Since $g(0^n) = 0$ and 
$g(1^n) = 1$, the input to $f$ will be either $r_1 = \lor_{i=1}^{b(k)} e_{\sigma(i)}$ or $r_0 = 
\lor_{i=1}^{b(k)-1} e_{\sigma(i)}$. Since $g$ alternates $\alt(g)$ times thereby changing the 
input to $f$ between $r_0$ and $r_1$, $f\circ g$ will also alternate $\alt(g)$ times if value 
of $f$ changes on flipping location $\sigma(b(k))$. 

Thus in both cases, if $f$ alternates once, $f\circ g$ alternates $\alt(g)$ in the chain $B$. 
Since $f$ alternates $\alt(f)$ times on $\sigma$, $f\circ g$ alternates $\alt(f) \cdot 
\alt(g)$ times in the chain $B$.  
\end{proof}
For $f = \lor_n$ and $g$ being a  parity on $m$ bits for any odd integer $m$, $\alt(f \circ g) \ge mn$ by~\cref{alt:composition} while $\alt(f)\cdot \alt(g) = m$. Thus, in general, it is not true that $\alt(f\circ g) \le \alt(f) \cdot \alt(g)$ and hence \cref{alt:composition} is not tight. Using~\cref{alt:composition}, we prove the following Corollary. 
\begin{corollary}\label{corr:alt-compose}
For any $h:\zon \to \zo$, with $h(0^n) \ne h(1^n)$, for any $k \ge 2$, $\alt(h^{\circ k}) \ge \alt(h)^k$.
\end{corollary}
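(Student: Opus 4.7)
The plan is to proceed by straightforward induction on $k$, applying \cref{alt:composition} at each step with $h$ in the inner slot so that the required hypothesis $h(0^n)\ne h(1^n)$ is available throughout.

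For the base case $k=2$, unfolding the definition gives $h^{\circ 2} = h \circ h$. Applying \cref{alt:composition} with $f = h$ (on $n$ variables, playing the role of the outer function) and $g = h$ (on $n$ variables, playing the role of the inner function), and using the hypothesis $h(0^n)\ne h(1^n)$ to satisfy the lemma's precondition on $g$, we obtain
\[
\alt(h^{\circ 2}) \;\ge\; \alt(h)\cdot \alt(h) \;=\; \alt(h)^2.
\]

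For the inductive step, assume $\alt(h^{\circ (k-1)}) \ge \alt(h)^{k-1}$. By the definition of iterated composition given in \cref{sec:prelims}, $h^{\circ k} = h^{\circ (k-1)} \circ h$, so the outer function is $h^{\circ(k-1)}$ (on $n^{k-1}$ variables) and the inner function is $h$ (on $n$ variables). The precondition of \cref{alt:composition} requires only that the inner function differs at $0^n$ and $1^n$, which is exactly the given hypothesis on $h$. Applying the lemma and then the inductive hypothesis,
\[
\alt(h^{\circ k}) \;\ge\; \alt(h^{\circ(k-1)}) \cdot \alt(h) \;\ge\; \alt(h)^{k-1}\cdot \alt(h) \;=\; \alt(h)^k,
\]
completing the induction.

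There is essentially no obstacle here: the corollary is an immediate iteration of \cref{alt:composition}. The one subtlety worth flagging is that \cref{alt:composition} demands the hypothesis on the inner function at each invocation, which is why we peel off one copy of $h$ from the inside (writing $h^{\circ k} = h^{\circ(k-1)} \circ h$) rather than from the outside --- this keeps the inner function equal to $h$, for which $h(0^n)\ne h(1^n)$ is assumed, and avoids having to argue an analogous property for $h^{\circ(k-1)}$ on $0^{n^{k-1}}$ and $1^{n^{k-1}}$ (which would in fact follow, but is unnecessary).
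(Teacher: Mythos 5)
Your proof is correct and matches the paper's argument exactly: induction on $k$ with base case $f=g=h$ and inductive step applying \cref{alt:composition} to $f=h^{\circ(k-1)}$, $g=h$, which is precisely the decomposition $h^{\circ k}=h^{\circ(k-1)}\circ h$ from the definition. The remark about keeping $h$ in the inner slot so its hypothesis applies is a nice observation, though the paper does the same implicitly.
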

\begin{proof}
By induction on $k$. For $k=2$, 
applying~\cref{alt:composition} with $f = g = h$, we get that $\alt(h \circ h) \ge 
\alt(h)^2$. Now for $k > 2$, applying~\cref{alt:composition} with $f = h^{\circ k-1}$ and 
$g=h$ and by inductive hypothesis, $\alt(h^{\circ k-1} \circ h) \ge \alt(h^{\circ k-1}) \cdot 
\alt(h) \ge \alt(h)^{k-1} \cdot \alt(h)$. Hence $\alt(h^{\circ k}) = \alt(h^{\circ k-1} \circ 
h) \ge \alt(h)^k$.  
\end{proof}

We use~\cref{corr:alt-compose} to exhibit a family of Boolean functions for which alternation is super-linear in  sensitivity. 

\introthm{alt:sens:sep}{
There exists a family of Boolean functions $\calG =\{ g_k :\zo^{n_k} \to \zo$ $\mid k \in \N\}$ such that $\alt(g_k) \ge \sens(g_k)^{\log_3 5}$ while $\DT(g_k)$ is  $\Omega(n_k^{\log_6 3})$. 
} 
\begin{proof}
Consider the address function 
$\ADDR_t:\zo^{t+2^t} \to \zo$ defined as 
$\ADDR_t(x_1,x_2,\dots,x_t,\allowbreak y_0y_1,y_2,\dots,y_{2^t-1}) = y_{int(x_1x_2\dots x_t)}$ where $int(x)$ is the integer corresponding to the binary string $x$. 
Consider the chain $
(000000,  001000,101000 , \allowbreak 101010 , 111010,111011,111111)$. Since, $\ADDR_2$ changes value $5$ times along this chain, $\alt(\ADDR_2) \ge 5$ while $\sens(\ADDR_2) = 3$. We consider the family of functions $\calG = \set{g_k\mid k \in N}$ obtained by 
composing $\ADDR_2$ $k$ times. Since sensitivity of composed function is 
at most the product of their sensitivity~\cite{Tal13}, $\sens(g_k) \le \sens(\ADDR_2)^k = 3^k$. Since $g_1=\ADDR_2$ is $0$ on 
all zero input and $1$ on all ones input,  applying 
\cref{corr:alt-compose}, $\alt(g_k) \ge 5^k \ge \sens(g_k)^{\log_3 5}$. Note that $\DT(g_k) = \DT(\ADDR_2)^k$ (as decision tree depth multiplies under composition~\cite{Tal13}). Hence $\DT(g_k) = 3^k$ which is $n_k^{\log_6 3}$ where $n_k$ is the number of variables of $g_k$ and hence does not grow logarithmic in $n_k$.  
\end{proof}

\begin{remark}
We observe that for the family of functions $\calG = \set{g_k : \zo^{n_k} \to \zo \mid k \in \N}$ of~\cref{alt:sens:sep}, $\log \sps(g_k) \le 2\deg(g_k) = 2 \cdot 3^k$ (see Corollary 2.8, \cite{T14} for a proof of the inequality).  In~\cref{alt:sens:sep}, we also showed that $\alt(g_k) \ge 5^k$ which is at least $(0.5\log \sps(g_k))^{\log_3 5}$.
Hence the same family also exhibits a super-linear gap between $\alt(g_k)$ and $\log \sps(g_k)$.  
\end{remark}

\section{\textsc{\XOR~Log-Rank Conjecture} for Bounded Alternation Boolean Functions}
\label{sec:xor-polylogn}
In this section, we prove the \textsc{\XOR~Log-Rank Conjecture} for $f$ when $
\alt(f)$ is at most $\poly(\log n)$.  Before proceeding, we prove that for all 
Boolean functions $f$, $\deg(f) \le \alt(f) \degtwo(f)  \degm(f)$  
(\cref{eq:deg:deg2} from Introduction) in~\cref{deg:deg2:lem}. We remark that the case $m=2$ is already observed by Lin and Zhang (Theorem 14,~\cite{LZ17}). 

\begin{lemma}\label{deg:deg2:lem}
For any Boolean function $f:\zon \to \zo$, and $m > 1$,  \[\deg(f) \le \alt(f) \cdot \degtwo(f) \cdot \degm(f) \tag{1}\] 
\end{lemma}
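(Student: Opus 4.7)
My plan is to adapt the template of Lin and Zhang's proof of Theorem~14 in \cite{LZ17}, which establishes the special case $m=2$, namely $\deg(f) \le \alt(f) \cdot \degtwo(f)^2$. That argument proceeds by using the alternation structure of $f$ to reduce the problem to bounding the real degree of monotone Boolean pieces, and then invoking the monotone inequality $\deg(g) \le \degtwo(g)^2$ which follows from the classical chain $\deg(g) \le \DT(g) \le \sens(g)^2 \le \degtwo(g)^2$ (using $\sens(g) \le \degtwo(g)$ for monotone $g$, Proposition~4 of \cite{BW02}). The goal of the generalization is to identify which one of the two $\degtwo(f)$ factors in this product can be weakened to $\degm(f)$.

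Concretely, I would proceed in three steps. First, isolate the Lin--Zhang step where one $\degtwo(g)$ factor arises from bounding a sensitivity-like quantity by $\degtwo$ on a monotone restriction. Second, replace that factor by a $\degm$-based bound: specifically, try to establish the refined monotone inequality $\deg(g) \le \degtwo(g)\cdot \degm(g)$, for instance via the chain $\deg(g) \le \DT(g) \le \sens(g)\cdot \bsens(g)$ and then separately bounding $\sens(g) \le \degtwo(g)$ and $\bsens(g) \le \degm(g)$ for monotone $g$ (or some finer variant involving certificate complexity). Third, reassemble: for general $f$ with $\alt(f)=t$, exhibit the $t$ monotone restrictions used by Lin--Zhang, note that each restriction $g$ satisfies $\degtwo(g)\le \degtwo(f)$ and $\degm(g) \le \degm(f)$, and sum the monotone inequality over the pieces to obtain $\deg(f) \le \alt(f)\cdot \degtwo(f)\cdot \degm(f)$.

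The main obstacle is establishing the refined monotone bound $\deg(g) \le \degtwo(g)\cdot\degm(g)$. A direct analog of the Nisan--Szegedy-style inequality, $\sens(g) \le \degm(g)$ for monotone $g$, fails for composite $m$: for instance $\sens(\OR_n)=n$ while $\deg_6(\OR_n)=n^{o(1)}$, so the monotone argument cannot route through sensitivity alone. An alternative I would pursue is to compare the $\F_2$, $\Z_m$, and real polynomial representations of $g$ directly, exploiting that for monotone $g$ the minterm/maxterm structure constrains all three representations simultaneously; another option is to absorb the gap into a sharper alternation decomposition that avoids passing through the monotone base case entirely, instead bounding the real degree of $f$ via a telescoping argument along a chain realizing $\alt(f)$ and controlling each increment using the $\F_2$ and $\Z_m$ polynomials of $f$ at once. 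Identifying the right such identity is where the technical work lies.
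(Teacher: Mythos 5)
Your proposal takes a different route from the paper's and it contains a genuine gap, one that you yourself flag: the whole construction rests on a refined monotone inequality $\deg(g) \le \degtwo(g)\cdot\degm(g)$ that you never establish. Your candidate derivation via $\DT(g) \le \sens(g)\cdot\bsens(g)$ together with $\bsens(g)\le\degm(g)$ is not available (no such bound of block sensitivity by $\Z_m$-degree is on the table, and as you note even the sensitivity version fails), and the alternative ideas you sketch --- comparing minterm structure across the three representations, or a telescoping argument along an alternating chain --- are left entirely unexecuted. As written, the argument does not go through.

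The missing idea is that no monotone base case is needed at all, because the two degree factors in the Lin--Zhang bound have different origins. The paper's proof is the three-link chain
\[
\deg(f) \;\le\; \DT(f) \;\le\; \cert(f,0^n)\cdot\degm(f) \;\le\; \bigl(\alt(f)\,\degtwo(f)\bigr)\cdot\degm(f).
\]
The middle inequality is the Buhrman--de Wolf decision-tree construction: repeatedly query the variables of a certificate of the current restriction; any certificate must set at least one variable in every maximum-degree monomial of the multilinear representation, so each round decreases the degree by at least one. That argument uses nothing about the reals --- it applies verbatim to the (unique) multilinear representation over $\Z_m$, so the process terminates after at most $\degm(f)$ rounds. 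The last inequality is Lin and Zhang's Lemma 12(2), $\cert(f,0^n)\le\alt(f)\,\degtwo(f)$, invoked as a black box; that is where the single $\degtwo$ factor, and the alternation decomposition into monotone pieces, actually live. So the answer to your opening question --- which of the two $\degtwo$ factors can be weakened to $\degm$ --- is: the one contributed by the degree-reduction step of the decision-tree construction, not the one inside the certificate bound, and weakening it requires no new monotone identity, only the observation that degree reduction by certificates is ring-agnostic.
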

\begin{proof}
The proof of this lemma closely follows the argument of Lin and Zhang (Theorem 14,
\cite{LZ17}).
Fix any Boolean function $f$.  
Buhrman and de Wolf~\cite{BW02} showed that given a certificate of $f$ on $0^n$ of 
size $\cert(f,0^n)$ and a polynomial representation of $f$ over reals with degree of 
 $\deg(f)$, $\DT(f) \le \cert(f,0^n)\cdot \deg(f)$. Their idea is that any 
certificate of $f$ must set at least one variable in all monomials of maximum degree 
in the polynomial representation of $f$. Hence, querying variables in a certificate 
must reduce the degree of the function by at least once. Observe that the same 
argument holds even if the polynomials are represented over $\Z_m$. Hence $\DT(f) \le 
\cert(f,0^n) \cdot \degm(f)$. 

Applying a result of 
Lin and Zhang, who showed that $\cert(f,0^n) \le \alt(f) \degtwo(f)$ (Lemma 12(2) of~
\cite{LZ17}), we get that, $\DT(f) \le \alt(f) \degtwo(f)\degm(f)$. Observing that $
\deg(f) \le \DT(f)$ (see~\cite{BW02}) completes the argument. 
\end{proof}
Applying~\cref{deg:deg2:lem} with $m=2$, we have for any Boolean function $f$,

\begin{equation} \label{alt:deg:deg2}
\deg(f) \le \alt(f) \cdot \degtwo(f)^2  
\end{equation}

As mentioned above, bound \cref{alt:deg:deg2} on $\deg$ is implicit in the result of~\cite{LZ17}. Recently, Li and Sun showed that $\deg(f) = O(\alt(f) \deg_m(f)^2)$ (Theorem 2,~\cite{LS18}) implying that \cref{alt:deg:deg2} holds not just for $m=2$ but for all integers $m$. 

We now proceed to prove the main result of this section.
As a first step towards showing $\alt(f) \le \poly(\log n)$ implies that the \textsc{\XOR~Log-Rank Conjecture} holds for $f$, we prove the following bound on the weighted average of the Fourier coefficients, weighted by the number of elements.

\begin{proposition}
 \label{fourier:lb}
For an $f:\zon \to \pmo$ that depends on all its inputs, $ \sum_S |\fhat(S)||S| \ge n $.
\end{proposition}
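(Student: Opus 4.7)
The plan is to decompose the sum $\sum_S |\fhat(S)||S|$ by variable and show that each variable contributes at least $1$ whenever $f$ depends on that variable. Concretely, switching the order of summation gives
\[
\sum_{S \sse [n]} |\fhat(S)|\,|S| \;=\; \sum_{i=1}^{n} \sum_{S \ni i} |\fhat(S)|,
\]
so it suffices to prove that for every $i \in [n]$ on which $f$ depends, $\sum_{S \ni i} |\fhat(S)| \ge 1$. Since $f$ depends on all inputs by hypothesis, summing this over $i$ yields the bound of $n$.

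To establish the per-coordinate bound, I would use the standard discrete derivative trick in the Fourier basis. Fix $i \in [n]$. Because $f$ depends on $x_i$, there exist $x,y \in \zon$ differing only in coordinate $i$ with $f(x) \ne f(y)$; since $f$ is $\pmo$-valued, $|f(x)-f(y)| = 2$. Expanding via the Fourier representation $f = \sum_S \fhat(S)\chi_S$ and noting that $\chi_S(x) = \chi_S(y)$ for $i \notin S$ while $\chi_S(x) = -\chi_S(y)$ for $i \in S$, we get
\[
f(x) - f(y) \;=\; 2 \sum_{S \ni i} \fhat(S)\, \chi_S(x).
\]
Taking absolute values and using $|\chi_S(x)| = 1$,
\[
1 \;=\; \left| \sum_{S \ni i} \fhat(S)\,\chi_S(x) \right| \;\le\; \sum_{S \ni i} |\fhat(S)|,
\]
as desired.

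There is no real obstacle here: the argument is a short Fourier-analytic observation once one recognizes that the quantity $\sum_{S \ni i}|\fhat(S)|$ is the $\ell_1$-norm of the Fourier transform of the discrete derivative in direction $i$, and the dependence of $f$ on $x_i$ forces this derivative to attain a value of magnitude $1$ at some input. The only subtlety worth flagging is that this step crucially uses that $f$ takes values in $\pmo$ (so that a disagreement forces $|f(x)-f(y)|=2$); if $f$ took values in $\{0,1\}$ we would lose a factor of $2$, which explains why the statement of \cref{fourier:lb} is phrased for the $\pmo$ encoding.
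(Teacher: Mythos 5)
Your proof is correct and follows essentially the same route as the paper's: both reduce to showing $\sum_{S \ni i} |\fhat(S)| \ge 1$ for each coordinate $i$, and both obtain this by expressing the discrete derivative in direction $i$ via the Fourier coefficients containing $i$ and invoking the fact that $f$ attains different values on some pair of inputs differing only in coordinate $i$. The only cosmetic difference is that the paper phrases the derivative through the restrictions $f|_{x_i=0}$ and $f|_{x_i=1}$ rather than a pair of adjacent inputs.
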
 
\begin{proof}
It suffices to show that for every $i \in [n]$, $\sum_{S:i \in S} |\fhat(S)| 
\ge 1$. Fix an $i \in [n]$. Since \[f(x) = \sum_S \fhat(S)\chi_S(x) = \sum_{S \subseteq [n] \setminus \{i\}} (\fhat(S) 
+ \fhat(S \cup \{i\})(-1)^{x_i}) \prod_{j \in S}(-1)^{x_j}\] for any $S \subseteq [n]\setminus \{i\}$ and  $b \in \zo$, $\widehat{f|_{x_i=b}}(S) = \fhat(S) + (-1)^b\cdot \fhat(S \cup \{i\})$.
Hence we conclude that for any $x$,
$ f_{x_i=0}(x)-f_{x_i=1}(x) = \sum_{S \subseteq [n] \setminus \{i\}} 2\fhat(S \cup \{i\}) \prod_{j \in S}(-1)^{x_j}$.  Now taking absolute values on both sides and applying triangle inequality,
\begin{equation*}
\left |\frac{f_{x_i=0}-f_{x_i=1}}{2} \right |  = \left |\sum_{S \subseteq [n] \setminus \{i\}} \fhat(S \cup \{i\}) \prod_{j \in S}(-1)^{x_j} \right | 
 \le \sum_{S \subseteq [n] \setminus \{i\}} \left |  \fhat(S \cup \{i\}) \right | \label{eq:fourier:lb}
\end{equation*}
Since $f$ is sensitive at $i$ on some input $a \in \zon$, 
for the input $a'$ obtained by removing $i^{th}$ bit from $a$, $|f_{x_i=0}(a')-f_{x_i=1}(a')| = 2$ implying $\sum_{S : i \in S} |\fhat(S)| \ge 1$ by the above equation which completes the proof. 
\end{proof}

We show that for Boolean functions $f$  with range as $\zo$, if $\alt(f) \le \poly(\log n)$, then $\deg(f) \le \poly(\log \sps(f))$. This implies that the \textsc{\XOR~Log-Rank Conjecture} holds\footnote{Using the facts that $\CC_\oplus(f) \le 2{\sf DT}(f)$ \cite{MO09} and $\DT(f) \le \deg(f)^4$	\cite{BW02}, $\CC_\oplus(f) =
	O(\deg(f)^4)$ implying $\CC_\oplus (f) = O(\poly(\log \sps(f))$.} for $f$.
 As a first step, using~\cref{fourier:lb}, we show that if $\deg(f) \le \poly(\log n)$, then 
 $\deg(f) \le \poly(\log \sps(f))$ (\cref{deg:sps}).  We then argue using~
 \cref{alt:deg:deg2} that $\alt(f) \le \poly(\log n)$ implies that $\deg(f) \le \poly(\log n)$ 
 proving~\cref{alt:polylog}.

\begin{lemma} \label{deg:sps}
For an $f:\zon \to \zo$ which depends on all its inputs and for large enough $n$, 
if $\deg(f) \le (\log n)^c$ for some $c > 0$, then $\deg(f) \le (\log 
\sps(f))^c$.
\end{lemma}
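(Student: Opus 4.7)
The plan is to show that under the hypothesis $\deg(f) \le (\log n)^c$, we actually have $\sps(f) \ge n$ for sufficiently large $n$, from which the conclusion is immediate: $(\log \sps(f))^c \ge (\log n)^c \ge \deg(f)$. So essentially the whole lemma reduces to proving a good polynomial-in-$n$ lower bound on $\sps(f)$.

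To obtain such a bound, I would work with the $\pmo$-valued version $g = 1 - 2f$, for which $\sps(g) = \sps(f)$ by definition and $\deg(g) = \deg(f)$. Since $f$ depends on all its inputs, so does $g$, so Proposition~\ref{fourier:lb} gives $\sum_S |\ghat(S)|\,|S| \ge n$. Using the trivial bound $|S| \le \deg(g)$ whenever $\ghat(S) \ne 0$, this becomes
\[
\sum_S |\ghat(S)| \;\ge\; \frac{n}{\deg(g)} \;=\; \frac{n}{\deg(f)}.
\]
Next I would apply Cauchy--Schwarz to the left-hand side, viewing it as $\sum_{S:\ghat(S)\ne 0} |\ghat(S)|\cdot 1$, which gives $\sum_S |\ghat(S)| \le \sqrt{\sps(g)\cdot\sum_S \ghat(S)^2} = \sqrt{\sps(g)}$ by Parseval (since $g$ is $\pmo$-valued). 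Combining the two bounds yields
\[
\sps(f) \;=\; \sps(g) \;\ge\; \frac{n^2}{\deg(f)^2}.
\]

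Finally I would feed the hypothesis in: $\deg(f) \le (\log n)^c$ implies $\sps(f) \ge n^2/(\log n)^{2c}$, so $\log \sps(f) \ge 2\log n - 2c\log\log n$. For $n$ large enough (depending only on $c$) this exceeds $\log n$, so $(\log \sps(f))^c \ge (\log n)^c \ge \deg(f)$, finishing the proof.

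The only substantive step is the conversion of Proposition~\ref{fourier:lb} into a sparsity lower bound via Parseval and Cauchy--Schwarz; the rest is routine asymptotic manipulation. I don't anticipate any real obstacle, though one should be mindful that the weaker first-moment estimate $\sps(f) \ge n/\deg(f)$ would only give $\log \sps(f) \ge \log n - c\log\log n$, which is strictly less than $\log n$; it is the Cauchy--Schwarz step that squares the lower bound and thereby absorbs the $\log\log n$ slack.
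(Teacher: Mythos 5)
Your proposal is correct and follows essentially the same route as the paper: apply \cref{fourier:lb} to $g = 1-2f$, bound $|S|$ by $\deg(f)$, use Cauchy--Schwarz with Parseval to get $\sum_S |\ghat(S)| \le \sqrt{\sps(f)}$, and conclude $\sqrt{\sps(f)}\cdot\deg(f) \ge n$, which for large $n$ forces $\sps(f) \ge n$. The only cosmetic difference is that you phrase the last step via $\log\sps(f) \ge 2\log n - 2c\log\log n$ while the paper writes $\sqrt{\sps(f)} \ge n/(\log n)^c \ge \sqrt{n}$; these are the same estimate.
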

\begin{proof}
Since the $f$ depends on all the inputs, applying~\cref{fourier:lb} to $g(x) = 1-2f(x)$,
$n \le \sum_S |\ghat(S)||S| \allowbreak \le $ $ \deg(f)\sum_S |\ghat(S)| \le \deg(f)\sqrt{\sps(f)}$. In concluding this, we used the fact that maximum sized index $|S|$ for which $\ghat(S) \ne 0$ is $\deg(f)$ and $\sum_S |\ghat(S)| \le \sqrt{\sps(f)}$ (by Cauchy-Schwartz and $\sps(f) = \sps(g)$).
Thus, $\sqrt{\sps(f)}\cdot \deg(f) \ge n$. Since $\deg(f) 
\le (\log n)^c$, we have $\sqrt{\sps(f)} \ge \frac{n}{(\log n)^c} \ge \sqrt{n}$ for 
large enough $n$. Hence $\deg(f) \le (\log n)^c \le  (\log \sps(f))^c$. 
\end{proof}
\introthm{alt:polylog}{
For large enough $n$, the \textsc{\XOR~Log-Rank Conjecture} is true for all $f:\zon \to \zo$, such that  $\alt(f) \le \poly(\log n)$ where $f$ depends on all its inputs.} 
\begin{proof}
 If $\degtwo(f) =1$, then $f$ is a parity function and the \textsc{\XOR~Log-Rank Conjecture} holds for $f$. Hence we can assume that $\degtwo(f) > 1$. If $\alt(f) \le \degtwo(f)$, then by \cref{alt:deg:deg2}, we have $\deg(f) \le 
\degtwo(f)^3 \le \log \sps(f)^3$ (since $\degtwo(f)> 1$, $\degtwo(f) \le \log \sps(f)$~\cite{BC99}). Hence the \textsc{\XOR~Log-Rank Conjecture} holds for $f$. If $\alt(f) > 
\degtwo(f)$, then by \cref{alt:deg:deg2}, $\deg(f) < \alt(f)^3$. Since $\alt(f) \le 
\poly(\log n)$, we have $\deg(f) \le \poly(\log n)$. Applying 
\cref{deg:sps}, we get that $\deg(f) \le \poly(\log \sps(f))$.  
\end{proof}

\begin{remark}
It should be noted that for $f$ satisfying conditions of~\cref{alt:polylog}, the \textsc{Sensitivity Conjecture} is true. This is because for $f$ that depends on all its inputs, $\sens(f) = \Omega(\log n)$~\cite{Sim83} implying that $\alt(f) \le \poly(\log n) \le \poly(\sens(f))$. Hence the \textsc{Sensitivity Conjecture} is true for $f$ by the result of Lin and Zhang~\cite{LZ17}.
\end{remark}

We conclude this section, by giving an alternate proof for the result of Lin and Zhang~\cite{LZ17} that if for all Boolean functions $f$, $\alt(f) \le \poly(\log \sps(f))$, then the \textsc{\XOR~Log-Rank Conjecture} is true by making use of  the bound  (\ref{alt:deg:deg2}) on $\deg(f)$. Then, we give a trade-off between sensitivity and sparsity of Boolean functions.

\begin{theorem}
For any Boolean function $f$ if $\alt(f) \le (\log \sps(f))^c$ then \textsc{\XOR~Log-Rank Conjecture} holds.
\end{theorem}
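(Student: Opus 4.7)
The plan is to chain together the ingredients already set up in the section. The main algebraic engine is bound~(\ref{alt:deg:deg2}), namely $\deg(f) \le \alt(f)\cdot \degtwo(f)^2$, which converts an upper bound on $\alt(f)$ (together with a handle on $\degtwo(f)$) into an upper bound on $\deg(f)$, and degree can then be lifted to communication complexity via $\CC_\oplus(f) = O(\DT(f)) = O(\deg(f)^4)$ as already cited (from~\cite{MO09} and~\cite{BW02}). The remaining missing piece is to compare $\degtwo(f)$ against $\log \sps(f)$; this is provided by the classical observation (used just above in the proof of~\cref{alt:polylog}) that whenever $\degtwo(f) > 1$, one has $\degtwo(f) \le \log \sps(f)$ (by~\cite{BC99}).

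First I would dispose of the degenerate case $\degtwo(f) \le 1$. Here $f$ is either a constant or an affine $\F_2$-function (i.e.\ a parity of a subset of variables, possibly negated), and the \textsc{\XOR~Log-Rank Conjecture} holds trivially since $f^\oplus$ can be computed by exchanging the parities of the relevant coordinates in $O(1)$ bits. Next, assume $\degtwo(f) \ge 2$, so that $\degtwo(f) \le \log \sps(f)$. Plugging this and the hypothesis $\alt(f) \le (\log \sps(f))^c$ into~(\ref{alt:deg:deg2}) gives
\[
\deg(f) \;\le\; \alt(f)\cdot \degtwo(f)^2 \;\le\; (\log \sps(f))^c \cdot (\log \sps(f))^2 \;=\; (\log \sps(f))^{c+2}.
\]

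Finally, combining $\CC_\oplus(f) \le 2\,\DT(f)$ from~\cite{MO09} with $\DT(f) = O(\deg(f)^4)$ from~\cite{BW02} yields $\CC_\oplus(f) = O(\deg(f)^4) = O\!\left((\log \sps(f))^{4(c+2)}\right) = \poly(\log \sps(f))$, which is exactly the \textsc{\XOR~Log-Rank Conjecture} for $f$. There is no real obstacle: the statement is essentially a one-line composition of~(\ref{alt:deg:deg2}) with the $\degtwo$-versus-$\log\sps$ inequality and the decision-tree-to-communication passage, reorganised to take $\log\sps(f)$ rather than $\log n$ as the controlling parameter. The only minor care needed is the $\degtwo(f) = 1$ carve-out, which the argument above handles explicitly.
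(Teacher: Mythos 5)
Your proof is correct, and it uses the same engine as the paper's: the bound~(\ref{alt:deg:deg2}) $\deg(f)\le\alt(f)\cdot\degtwo(f)^2$, the inequality $\degtwo(f)\le\log\sps(f)$ for $\degtwo(f)>1$ from~\cite{BC99}, and the passage $\CC_\oplus(f)\le 2\DT(f)=O(\deg(f)^4)$ via~\cite{MO09} and~\cite{BW02}. Where you differ is in the decomposition, and your version is in fact tidier. The paper splits into four cases: $\alt(f)=1$ (handled by the monotone/unate result of~\cite{MO09}), $\degtwo(f)$ constant (handled by~\cite{TWXZ13}), $1<\alt(f)<\degtwo(f)$ (giving $\deg(f)\le\degtwo(f)^3\le(\log\sps(f))^3$), and $\degtwo(f)\le\alt(f)\le(\log\sps(f))^c$ (giving $\deg(f)\le\alt(f)^3\le(\log\sps(f))^{3c}$). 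You instead bound the two factors of $\alt(f)\cdot\degtwo(f)^2$ separately by $(\log\sps(f))^c$ and $(\log\sps(f))^2$, which merges the paper's last two cases into one line at the harmless cost of a slightly different exponent, and you replace the two external results invoked for the degenerate cases with a direct, elementary treatment of $\degtwo(f)\le 1$ (constant or affine $f$, for which $\CC_\oplus(f)=O(1)$; indeed for a nonconstant parity the hypothesis is vacuous anyway since $\sps(f)=1$). Nothing is lost in your reorganisation, and it makes the dependence on the hypothesis $\alt(f)\le(\log\sps(f))^c$ more transparent; the only thing the paper's finer case analysis buys is marginally better exponents in the subcases where $\alt(f)$ or $\degtwo(f)$ happens to be the dominant quantity.
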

\begin{proof}
For any Boolean function $f$ with $\alt(f) \le (\log \sps(f))^c$ we have the following exhaustive cases.
\begin{description}
	\item[[Case 1 : $\alt(f) = 1$]] : When $\alt(f) = 1$, $f$ is unate. Hence 
	the \textsc{\XOR~Log-Rank Conjecture} holds due to the result of~\cite{MO09}.
	\item[[Case 2: $\degtwo(f)$ is constant ]] :	When $\degtwo(f)$ is constant, the \textsc{\XOR~Log-Rank Conjecture} holds due to the result of~\cite{TWXZ13}.
	\item[[Case 3 : $1 < \alt(f) < \degtwo(f)$]] :  For this case, when
		$\alt(f) < \degtwo(f)$, by \cref{alt:deg:deg2}, $\deg(f) \le \degtwo(f)^3$ 
		which is upper bounded by $(\log \sps(f))^3$ (since $\degtwo(f)> 1$, $\degtwo(f) \le \log \sps(f)$~\cite{BC99}). Thus $\deg(f) \le  (\log \sps(f))^3$. On the other hand, $\CC_
		\oplus(f) \le 2{\sf DT}(f)$ \cite{MO09}. Also it is known that $\DT(f) \le \deg(f)^4$
	\cite{BW02}. Hence $\CC_\oplus(f) =
	O(\deg(f)^4)$ giving $\CC_\oplus (f) = O((\log \sps(f))^{12})$. 
	
	\item[[Case 4 : $\degtwo(f) \le \alt(f) \le (\log \sps(f))^c$]] : we have
	$\CC_\oplus(f)= O(\deg(f)^{\ell}) \le O(\alt(f)^{3\ell})$ where the last inequality follows by 	\cref{alt:deg:deg2} in this case. By assumption, $\log \sps(f) \ge
	alt(f)^{1/c}$. Thus $\CC_\oplus(f) = O(\log
	\sps(f))^{3\ell/c})$.
\end{description} 
\end{proof}

From~\cref{fourier:lb}, we derive a trade-off between sensitivity and sparsity of Boolean functions.
\begin{corollary}
\label{alt:rem} For any Boolean function $f$ which depends on all its $n$ bits,
$\sens(f)\allowbreak \sqrt{\sps(f)} \ge n$. 
\end{corollary}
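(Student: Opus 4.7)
The plan is to apply~\cref{fourier:lb} to the $\pm 1$-valued function $g = 1 - 2f$ (which has the same sparsity and the same per-input sensitivity as $f$), obtaining $\sum_S |S|\,|\ghat(S)| \ge n$, and then to upper bound the left-hand side by $\sens(f)\sqrt{\sps(f)}$.

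The first step is Cauchy-Schwarz restricted to the Fourier support of $g$, whose cardinality is exactly $\sps(f)$:
\[\left(\sum_S |S|\,|\ghat(S)|\right)^{\!2} \;\le\; \sps(f) \cdot \sum_S |S|^2\, \ghat(S)^2.\]
Thus it suffices to show $\sum_S |S|^2 \,\ghat(S)^2 \le \sens(f)^2$.

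For this, I would introduce the discrete derivatives $D_i g(x) = (g(x) - g(x \oplus e_i))/2$. Using the character identity $\chi_S(x \oplus e_i) = \chi_S(x)$ for $i \notin S$ and $\chi_S(x \oplus e_i) = -\chi_S(x)$ for $i \in S$, a direct Fourier computation gives $\widehat{D_i g}(S) = \ghat(S)$ when $i \in S$ and $0$ otherwise. Consequently, $H(x) = \sum_{i=1}^n D_i g(x)$ has Fourier coefficient $|S|\,\ghat(S)$ at $S$, and Parseval yields
\[\sum_S |S|^2 \ghat(S)^2 \;=\; \Expt_{x \in \zon}\bigl[H(x)^2\bigr].\]
Because $g$ takes values in $\pmo$, each $D_i g(x)$ is $0$ when $g$ is insensitive at coordinate $i$ on $x$ and equals $g(x)$ otherwise, so $H(x) = g(x)\cdot \sens(g,x)$ pointwise. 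Therefore $\Expt_x[H(x)^2] = \Expt_x[\sens(g,x)^2] \le \sens(g)^2 = \sens(f)^2$.

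Chaining the three displayed estimates gives $n^2 \le \sps(f) \cdot \sens(f)^2$, which is the claim after taking square roots. The most delicate point is the pointwise identity $H(x) = g(x)\,\sens(g,x)$: this is where the $\pm 1$ encoding is crucial, because all nonzero $D_i g(x)$ share the common sign $g(x)$ and therefore add constructively rather than cancelling, a phenomenon that would fail for the $\{0,1\}$-valued $f$.
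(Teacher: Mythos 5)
Your proof is correct and follows essentially the same route as the paper: apply \cref{fourier:lb} (to $1-2f$), then Cauchy--Schwarz over the Fourier support, then bound $\sum_S |S|^2\ghat(S)^2$ by $\sens(f)^2$. The only difference is that you derive the identity $\sum_S |S|^2\ghat(S)^2 = \Expt_x[\sens(f,x)^2]$ from first principles via discrete derivatives, whereas the paper cites it as Exercise 2.20 of O'Donnell; your derivation is a correct and welcome filling-in of that detail.
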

\begin{proof}
 Applying Cauchy-Schwartz inequality to  $\sum_S |\fhat(S)||S| \ge n $ from~\cref{fourier:lb}, we get 
		\begin{align*}
		\sum_S |\fhat(S)| |S| & \le \sqrt{\sum_S |S|^2 \fhat(S)^2
		\cdot \sps(f)} \le   \sens(f)	\sqrt{\sps(f)} \label{alt:sens:eq} 
		\end{align*}	

The last inequality follows since $\sum_S |S|^2 \fhat(S)^2 = \frac{1}{2^n}\sum_x \sens(f,x)^2 \le \sens(f)^2$ (see Exercise 2.20~\cite{OD14}).
\end{proof}

\section{Three Further Applications of the $\deg$ vs $\degtwo$ Relation}\label{sec:deg-deg2:app}
We showed that for all Boolean functions $f$, $\deg(f) \le \alt(f) 
\degtwo(f)^2$~(\cref{alt:deg:deg2}) in~\cref{sec:xor-polylogn}. We now give three 
applications of this result. Firstly, we show that Boolean functions of bounded 
alternation must have $\F_2$ degree $\Omega(\log n)$. Secondly, we partially answer a 
question raised by Kulkarni and Santha~\cite{KS13} on the sparsity of monotone 
Boolean functions by showing a variant of their statement. Thirdly, we observe that~
\cref{alt:deg:deg2} improves a bound on $\I[f]$ due to Guo and Komargodski~\cite{GK17}.
\paragraph{$\F_2$-degree of Bounded Alternation Functions :} For an $n$ bit monotone Boolean function $f$, we have $\deg(f) \le \degtwo(f)^2$. If $f$ depends on all its $n$ bits then $\deg(f) \ge \log n - O(\log \log n)$~\cite{NS92}. This implies that for monotone functions, $\degtwo(f) \ge \Omega(\sqrt{\log n})$. We now present a short argument improving this bound using~\cref{alt:deg:deg2}.

Suppose $f:\zon \to \zo$ and it depends on all its input bits. 
Gopalan \etal~\cite{GLS09} showed that for any such Boolean function, $\deg(f) \ge n/2^{\degtwo(f)}$. Along with~\cref{alt:deg:deg2}, this implies that for an $f$  with $\alt(f) \le n^{\epsilon}$ where $0 < \epsilon < 1$, 
$\degtwo(f)^2 2^{\degtwo(f)} \ge n^{1-\epsilon}$. Hence, $\degtwo(f) \ge (1-\epsilon)\log n - 2(\log \left ( \frac{1-\epsilon}{2} \log n\right))$. This gives us the following corollary.
\begin{corollary}\label{corr:deg2lb}
Fix any $0 <\epsilon < 1$. Let $f:\zon \to \zo$ be such that $\alt(f) \le n^{\epsilon}$ and it depends on all its inputs. Then, 
\[\degtwo(f) \ge (1-\epsilon)\log n - O\left (\log \left ( \frac{1-\epsilon}{2} \log n\right)\right).\] 
\end{corollary}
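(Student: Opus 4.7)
The plan is to combine two established inequalities in a single line of algebra. First, I would invoke the lower bound of Gopalan et al.~\cite{GLS09}, which states that any Boolean function $f$ depending on all $n$ of its inputs satisfies $\deg(f) \ge n/2^{\degtwo(f)}$. Second, I would apply the bound~\cref{alt:deg:deg2}, namely $\deg(f) \le \alt(f) \cdot \degtwo(f)^2$, established earlier in this section. Chaining these gives $n/2^{\degtwo(f)} \le \alt(f) \cdot \degtwo(f)^2$, and substituting the hypothesis $\alt(f) \le n^{\epsilon}$ yields the implicit inequality
\[
\degtwo(f)^2 \cdot 2^{\degtwo(f)} \ge n^{1-\epsilon}.
\]

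To convert this into the explicit lower bound claimed, I would take base-$2$ logarithms of both sides, obtaining $\degtwo(f) + 2\log \degtwo(f) \ge (1-\epsilon)\log n$, and then invert this transcendental inequality by a short case analysis. If $\degtwo(f) \ge (1-\epsilon)\log n$, the corollary already holds with room to spare. Otherwise $\degtwo(f) < (1-\epsilon)\log n$, so $\log \degtwo(f) < \log\bigl((1-\epsilon)\log n\bigr)$, and substituting this crude upper bound back into $\degtwo(f) \ge (1-\epsilon)\log n - 2\log \degtwo(f)$ produces
\[
\degtwo(f) \ge (1-\epsilon)\log n - O\!\left(\log\!\left(\tfrac{1-\epsilon}{2}\log n\right)\right),
\]
which is exactly the claimed bound.

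The only delicate point is this final bootstrapping step, where an implicit inequality of the form $d + 2\log d \ge L$ must be inverted to extract $d \ge L - O(\log L)$. Because the map $d \mapsto d + 2\log d$ is strictly increasing, any upper bound on $d$ can be fed back into the $\log$ term to tighten the lower bound, and the upper bound $d < (1-\epsilon)\log n$ available in the non-trivial case collapses the error term to the stated $O(\log((1-\epsilon)\log n /2))$. No ingredients beyond~\cref{alt:deg:deg2} and the Gopalan et al.\ inequality are needed.
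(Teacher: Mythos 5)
Your proposal is correct and follows essentially the same route as the paper: combine the Gopalan et al.\ bound $\deg(f) \ge n/2^{\degtwo(f)}$ with $\deg(f) \le \alt(f)\cdot\degtwo(f)^2$ and $\alt(f) \le n^{\epsilon}$ to get $\degtwo(f)^2\, 2^{\degtwo(f)} \ge n^{1-\epsilon}$, then invert by taking logarithms. Your explicit two-case bootstrapping of the inequality $d + 2\log d \ge (1-\epsilon)\log n$ is a welcome elaboration of a step the paper leaves implicit, but it is not a different argument.
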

Hence, Boolean functions whose alternation is at most $n^{\epsilon}$, cannot have a constant $\F_2$-degree. However, this need not be the case if we allow $\alt(f)$ to be $n$ (for instance, parity on $n$ bits).

\paragraph{Fourier Spectrum for Bounded Alternation Functions :} 
Kulkarni and Santha \cite{KS13} studied certain special Boolean functions which are indicator functions $f_{\calM}$ of a bridgeless matroids $\calM$ on ground set $[n]$. While it is known that for any $f$, $\log \sps(f) \le 2\cdot \deg(f)$ (Exercise 1.11,  \cite{OD14}), Kulkarni and Santha showed that this upper bound is asymptotically tight for $f = f_{\calM}$. They observed that $f_{\calM}$ is a monotone function (by virtue of the underlying support set being a matroid) and asked if a similar statement holds for the general class of monotone Boolean functions. More precisely, they asked whether $\log \sps(f) = \Omega(\deg(f))$ for every monotone Boolean 
function $f$.

We show that for functions of constant alternation (which includes monotone functions), $\log \sps(f)$ is relatively large. This can be seen as a variant of the question posed by Kulkarni and Santha.

\introthm{alt:bound}{
For Boolean functions $f$ with $\alt(f) = O(1)$, $\log \sps(f) =\Omega(\sqrt{\deg(f)})$.} 
\begin{proof}
Observe that for $\degtwo(f)=1$, $f$ is parity of constant number of variables or its negation as $\alt(f) = O(1)$. Hence the result holds for this case. For $\degtwo(f) > 1$,  by~\cref{alt:deg:deg2} and the result that $\degtwo(f) \le \log \sps(f)$ when $\degtwo(f) > 1$~\cite{BC99},  we get that $\deg(f) = O(\degtwo(f)^2) = O(\log \sps(f)^2)$.
  \end{proof}

Note that~\cref{alt:bound} does show that logarithm of sparsity of monotone functions is nearly close to the upper bound possible but does not completely settles the question of Kulkarni and Santha.

\paragraph{Improved Upper Bound for $\I[f]$ :} 
For an $n$ bit Boolean function, the best known upper bound of $\I[f]$ in terms of 
$\alt(f)$ is $\I[f] \le O(\alt(f) \sqrt{n})$ due to Guo and Komargodski~\cite{GK17} using a probabilistic argument. In~\cref{app:inf-alt} we give a simpler proof using a recent characterization of alternation due to Blais \etal~\cite{BCOST15}. 
Since $\I[f] \le \deg(f)$ (Fact 3.7, \cite{OD14}), the bound (\ref{alt:deg:deg2}) on $\deg(f)$ gives an improvement over the known bound on $\I[f]$ when $\degtwo(f) < \sqrt[4]{n}$.

\begin{appsection}{Simpler Proof of $\I[f] \le \alt(f) \sqrt{n}$}{app:inf-alt}
Suppose $f:\zon \to \zo$ is   a parity of $k$ monotone 
functions. Hence, in any chain from $0^n$ to $1^n$ in $\calB_n$, $f$ changes 
its value at most $k$ times. Blais \etal~\cite{BCOST15} recently 
showed that the converse of the statement is also true.
\begin{proposition}[Characterization of Alternation~\cite{BCOST15}] 
	\label{thm:alt-char}
	Let $f:\zon \to \zo$. Then
	there exists $k = \alt(f)$ monotone functions $g_1, \dots, g_k$ each from
	$\{0,1\}^n$ to $\{0,1\}$ such that
	\[f(x) = \begin{cases}
		 \oplus_{i=1}^k g_i & \text{ if } f(0^n) = 0 \\
		 \neg \oplus_{i=1}^k g_i & \text{ if } f(0^n) = 1
		\end{cases}
	\]
\end{proposition}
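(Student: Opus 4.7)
\emph{Proof proposal.} My plan is to encode $f$ as the XOR (up to a constant offset determined by $f(0^n)$) of level-set indicators of a single monotone integer-valued quantity on $\zon$. For each $x \in \zon$, define $a(x)$ to be the maximum, over chains $\calC$ passing through $x$, of the number of times $f$ flips along the portion of $\calC$ from $0^n$ up to $x$. Extending a chain-prefix witnessing $a(x)$ to a full maximal chain ending at $1^n$ shows $a(x) \le \alt(f) = k$, and extending a chain-prefix witnessing $a(x)$ to continue up to any $y$ with $x \prec y$ shows $a(x) \le a(y)$. Hence $a$ is monotone with values in $\{0,1,\ldots,k\}$.

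Next I would establish a parity identity: for any chain-prefix $0^n = y_0 \prec y_1 \prec \cdots \prec y_{wt(x)} = x$, the number of indices $i$ with $f(y_{i-1}) \neq f(y_i)$ has the same parity as $f(0^n) \oplus f(x)$, by a telescoping XOR along consecutive pairs. This parity is chain-independent, so in particular
\[
a(x) \equiv f(0^n) \oplus f(x) \pmod 2 \quad \text{for every } x \in \zon.
\]

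Now define $g_i : \zon \to \zo$ for $i = 1, \ldots, k$ by declaring $g_i(x) = 1$ iff $a(x) \ge i$. Monotonicity of $a$ makes each $g_i$ a monotone Boolean function, and the bound $a(x) \le k$ ensures that $k$ such functions suffice. Since $a(x) = \sum_{i=1}^k g_i(x)$ with $a(x) \in \{0,1,\ldots,k\}$, reducing modulo $2$ gives $a(x) \bmod 2 = \bigoplus_{i=1}^k g_i(x)$, and combining with the parity identity yields
\[
f(x) \;=\; f(0^n) \,\oplus\, \bigoplus_{i=1}^k g_i(x),
\]
which is exactly the claimed representation in both cases of $f(0^n)$.

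The main obstacle I foresee is purely notational: the paper defines $\alt(f, \calC)$ only for maximal chains $0^n \to 1^n$, so one must carefully set up the partial-chain counting used to define $a(x)$ and justify the ``extension'' lemma underlying both the monotonicity of $a$ and the bound $a(x) \le k$. Once this bookkeeping is in place, the construction of $a$, the parity identity, and the threshold decomposition into the $g_i$'s are each essentially one-line arguments.
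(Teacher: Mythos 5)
Your proof is correct. The paper does not prove this proposition itself --- it imports it from Blais et al.~\cite{BCOST15} --- and your argument is exactly the standard one from that reference: define $a(x)$ as the maximum number of alternations of $f$ over chain-prefixes from $0^n$ to $x$, observe it is monotone, bounded by $\alt(f)$, and congruent to $f(0^n)\oplus f(x)$ modulo $2$ by telescoping, then take the $g_i$ to be its threshold (level-set) indicators. The bookkeeping you flag about extending partial chains to maximal ones is genuinely routine, so nothing is missing.
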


Let $f$ be such that $\I[f] = \Omega(n)$. 
Observe (see Fact 2.14,~\cite{OD14}) that an $\Omega(n)$ lower bound on influence implies that a 
\emph{constant} fraction of edges in the Boolean hypercube must have $f$ evaluating 
to different values on their end points. Blais \etal~\cite{BCOST15} used this idea 
and a probabilistic argument to prove that the expected alternation in a random chain 
of the hypercube must be large -- inferring that $\alt(f) = \Omega(\sqrt{n})$. Later 
Guo and Komargodski~\cite{GK17}, again using a similar probabilistic argument, 
generalized it to show that $\I(f) = O(\alt(f) \sqrt{n})$. 

We now give a simpler proof of the statement $\I[f] \le O(\alt(f) \sqrt{n})$.
\begin{lemma}
For any $n$ variable Boolean function $f$, $\I[f] \le \alt(f) \sqrt{n}$.
\end{lemma}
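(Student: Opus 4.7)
The plan is to exploit the characterization of alternation recalled in Proposition~\ref{thm:alt-char}, which expresses any Boolean $f$ as an XOR (or the negation of an XOR) of exactly $k = \alt(f)$ monotone functions $g_1, \ldots, g_k : \zon \to \zo$. Since $\I[\neg h] = \I[h]$ for any $h$, I may assume $f = \bigoplus_{i=1}^k g_i$. The first step is a union bound at the level of individual-coordinate influences: if flipping $x_j$ changes the value of $f$, then it must change at least one of the $g_i$, so
\[
\mathrm{Inf}_j[f] \;\le\; \sum_{i=1}^k \mathrm{Inf}_j[g_i], \qquad \mathrm{Inf}_j[h] \defn \Pr_x\bigl[h(x) \ne h(x \oplus e_j)\bigr].
\]
Summing over $j$ gives $\I[f] \le \sum_{i=1}^k \I[g_i]$, so it suffices to show $\I[g] \le \sqrt{n}$ for every monotone $g$.

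The main step is this monotone-influence bound, for which I would pass to the $\pm 1$-valued version $G \defn 1 - 2g : \zon \to \pmo$. A direct computation of the degree-one Fourier coefficients gives
\[
\wh{G}(\{j\}) \;=\; \tfrac{1}{2}\bigl(\Expt_{x : x_j = 0}[G(x)] - \Expt_{x : x_j = 1}[G(x)]\bigr) \;=\; \Pr[g(x) = 1 \mid x_j = 1] - \Pr[g(x) = 1 \mid x_j = 0].
\]
By monotonicity, flipping $x_j$ from $0$ to $1$ can only turn $g$ from $0$ to $1$, so the right hand side is exactly $\mathrm{Inf}_j[g]$. Since $G$ takes values in $\pmo$, Parseval yields $\sum_{j=1}^n \wh{G}(\{j\})^2 \le \sum_S \wh{G}(S)^2 = 1$, and Cauchy--Schwarz then gives
\[
\I[g] \;=\; \sum_{j=1}^n \mathrm{Inf}_j[g] \;\le\; \sqrt{\,n \cdot \sum_{j=1}^n \mathrm{Inf}_j[g]^2\,} \;\le\; \sqrt{n}.
\]

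Combining the two bounds produces $\I[f] \le k \sqrt{n} = \alt(f)\sqrt{n}$, as required. I do not anticipate a genuine obstacle: once the decomposition of Proposition~\ref{thm:alt-char} is available, all that is needed is the standard Fourier identity $\wh{G}(\{j\}) = \mathrm{Inf}_j[g]$ for monotone $g$, together with Parseval and Cauchy--Schwarz. The only conceptual point worth emphasizing is that this route bypasses the random-chain / edge-counting probabilistic argument of Guo and Komargodski~\cite{GK17} and replaces it by a short Fourier-analytic derivation.
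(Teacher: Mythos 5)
Your proof is correct and follows essentially the same route as the paper: decompose $f$ via Proposition~\ref{thm:alt-char} into an XOR of $\alt(f)$ monotone functions, use subadditivity of influence under XOR, and invoke the $\sqrt{n}$ influence bound for monotone functions. The only cosmetic difference is that you prove that monotone bound in-line (via $\wh{G}(\{j\})=\mathrm{Inf}_j[g]$, Parseval, and Cauchy--Schwarz) and dispose of the negation case by $\I[\neg h]=\I[h]$, whereas the paper cites the bound as a textbook exercise and treats the negated summand separately.
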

\begin{proof}
For any Boolean function $f_1, f_2$ on $n$ bits and for any $x \in \zon$, it is easy to see that $\sens(f_1 \oplus f_2 ,x) \le \sens(f_1,x) + \sens(f_2,x)$ (see Lemma 28 of~\cite{ST16}). Hence, 
\begin{equation} \label{eq:inf:ub}
\I[f_1 \oplus f_2] = \Expt_x(\sens(f_1 \oplus f_2,x)) \le \Expt_x (\sens(f_1,x)) + \Expt_x(\sens(f_2,x)) = \I[f_1] + \I[f_2]
\end{equation}

Suppose $f(0^n) = 0$ and $\alt(f) = k$. Applying~\cref{thm:alt-char}, we get that there exists $k$ monotone functions, $g_1, g_2, \ldots, g_k$ all on $n$ variables such that $f(x) = g_1(x) \oplus g_2(x) \ldots \oplus g_k(x)$. Repeatedly applying~\cref{eq:inf:ub} on $f$, we get that $\I[f] \le \sum_{i=1}^k \I[g_i]$. Since $g_i$s are monotone, $\I[g_i] \le \sqrt{n}$ (Exercise 2.23, \cite{OD14}). Hence $\I[f] \le k \sqrt{n} = \alt(f)\sqrt{n}$. For the case when $f(0^n) = 1$   in the decomposition of~\cref{thm:alt-char} all $g_i$s will still be monotone except one which will be negation of a monotone function. But even for such functions influence is upper bounded by $\sqrt{n}$ (Exercise 2.5(b), \cite{OD14}). Hence a similar argument holds in this case too.  
\end{proof} 
\end{appsection}

\begin{proposition}\label{inf:alt}
For any $f:\zon \to \zo$, $\I[f] \le \alt(f) \cdot \degtwo(f)^2$.
\end{proposition}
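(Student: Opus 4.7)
The plan is to obtain this bound as an immediate corollary of the degree inequality (\ref{alt:deg:deg2}) combined with a well-known Fourier-analytic bound on influence. In particular, no new technical machinery is needed; the work has already been done in Lemma \ref{deg:deg2:lem}.

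First I would recall the standard Fourier fact that for every Boolean function $f:\zon\to\zo$, the total influence is bounded above by the real degree: $\I[f]\le \deg(f)$. This is Fact~3.7 of~\cite{OD14}, which follows directly from the identity $\I[f]=\sum_S|S|\fhat(S)^2$ (or its $\zo$-valued analogue) together with the observation that $\fhat(S)=0$ whenever $|S|>\deg(f)$ and $\sum_S \fhat(S)^2$ is bounded. Next I would invoke the already-proved relation $\deg(f)\le \alt(f)\cdot\degtwo(f)^2$ from (\ref{alt:deg:deg2}), which was derived in Section~\ref{sec:xor-polylogn} as a consequence of Lemma~\ref{deg:deg2:lem} with $m=2$. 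Chaining the two inequalities yields
\[
\I[f]\;\le\;\deg(f)\;\le\;\alt(f)\cdot\degtwo(f)^2,
\]
which is exactly the statement of the proposition.

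There is no real obstacle: both ingredients have been established earlier in the paper (or cited from standard references). The only thing worth noting is that the bound improves the Guo--Komargodski inequality $\I[f]=O(\alt(f)\sqrt{n})$ precisely in the regime where $\degtwo(f)<n^{1/4}$, since in that regime $\degtwo(f)^2<\sqrt{n}$; this is the significance remarked on just before the proposition, and my proof plan does not need to address it separately.
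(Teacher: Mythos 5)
Your proof is correct and is exactly the argument the paper uses: the paper derives \cref{inf:alt} by chaining the standard fact $\I[f]\le\deg(f)$ (Fact 3.7 of \cite{OD14}, also noted in the preliminaries) with the bound $\deg(f)\le\alt(f)\cdot\degtwo(f)^2$ from (\ref{alt:deg:deg2}). Nothing further is needed.
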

\noindent
This immediately gives improved learning algorithms for functions of bounded alternation in the PAC learning model.

Blais \etal~\cite{BCOST15} gave a uniform learning algorithm for the class of functions $C_t$ computable by circuits with at most $t$ negations that can learn an $f \in C_t$ from random examples with error $\epsilon > 0$ in time $n^{O(2^t\sqrt{n})/\epsilon}$ where $t \le O(\log n)$. In terms of alternation, the runtime is $n^{O(\alt(f)\sqrt{n})/\epsilon}$.
The main tool used in this area is the low degree learning algorithm due to Linial, Mansour and Nisan \cite{LMN93} using which the following result is derived in~\cite{OD14}.
\begin{lemma}[Corollary 3.22 and Theorem 3.36, \cite{OD14}]\label{learning}
For $t \ge 1$, let \[\calA_t = \set{f \mid  f:\pmon \to  \pmo, \I[f]  \le t}\]  and 
\[\calB_t = \{f \mid f:\pmon \to \pmo, \allowbreak \deg(f) \le t\}\] Then $A_t$ can be 
learned from random examples with error $\epsilon$ in time $n^{O(t/\epsilon)}$ for any $\epsilon \in (0,1)$
and $B_t$ can be exactly learned from random examples in time 
$n^t\poly(n , 2^t)$.
\end{lemma}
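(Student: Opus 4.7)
Both statements rest on the classical low-degree Fourier learning paradigm of Linial, Mansour and Nisan. The common subroutine is: draw $m$ uniform samples $(x^{(1)}, f(x^{(1)})), \ldots, (x^{(m)}, f(x^{(m)}))$, then for any target set $S \subseteq [n]$ estimate the Fourier coefficient by the empirical mean $\tilde{f}(S) = \frac{1}{m} \sum_{j=1}^m f(x^{(j)})\chi_S(x^{(j)})$. Since $f(x^{(j)})\chi_S(x^{(j)}) \in \pmo$, a Hoeffding bound shows that with $m = O(\log(1/\delta)/\tau^2)$ samples we have $|\tilde f(S) - \hat f(S)| \le \tau$ with probability at least $1-\delta$. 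A union bound over all sets one wishes to estimate then gives the sample complexity. The hypothesis returned will be $h(x) = \mathrm{sgn}\bigl(\sum_{S \in \calC} \tilde f(S) \chi_S(x)\bigr)$ for a carefully chosen collection $\calC$ of low-degree sets.

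\emph{Proof of the bound for $\calB_t$.} If $\deg(f) \le t$, then $\hat f(S) = 0$ for every $|S| > t$, so it suffices to take $\calC = \{S : |S| \le t\}$, a collection of size at most $\sum_{i=0}^{t} \binom{n}{i} = O(n^t)$. The key additional observation is that the nonzero Fourier coefficients of a $\pmo$-valued function of degree at most $t$ are rationals of the form $k/2^t$ for some integer $k$ (this follows because $f \cdot \chi_S$ is a degree-$\le 2t$ polynomial whose sum over $\zon$ is $2^n \hat f(S)$, combined with a granularity argument on multilinear polynomials of bounded degree). Choosing $\tau < 2^{-(t+1)}$ lets us round each $\tilde f(S)$ to the nearest multiple of $2^{-t}$ and obtain $\hat f(S)$ \emph{exactly}. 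The total sample and time cost is $n^t \cdot \mathrm{poly}(n, 2^t)$, as claimed.

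\emph{Proof of the bound for $\calA_t$.} Here we use the Fourier-analytic identity $\I[f] = \sum_S |S|\, \hat f(S)^2$. For any threshold $k$,
\begin{equation*}
\sum_{|S| > k} \hat f(S)^2 \;\le\; \frac{1}{k+1}\sum_{|S| > k} |S|\, \hat f(S)^2 \;\le\; \frac{\I[f]}{k+1} \;\le\; \frac{t}{k+1}.
\end{equation*}
Setting $k = \lceil 2t/\epsilon \rceil - 1$ gives $\sum_{|S| > k} \hat f(S)^2 \le \epsilon/2$. Let $\calC = \{S : |S| \le k\}$, of size $n^{O(t/\epsilon)}$. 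Estimating each coefficient in $\calC$ to additive accuracy $\tau = \sqrt{\epsilon/(2|\calC|)}$ via the subroutine above (which requires $\mathrm{poly}(|\calC|/\epsilon)$ samples and time $n^{O(t/\epsilon)}$ overall) yields, by Parseval, a real-valued $g$ with $\E_x[(f(x)-g(x))^2] \le \epsilon$; rounding by $\mathrm{sgn}$ at most doubles the error, so $\Pr_x[h(x) \ne f(x)] \le \epsilon$.

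\emph{Main obstacles.} The approximate learning part for $\calA_t$ is the straightforward LMN argument once the Parseval/Markov tail bound is in place. The delicate piece is the \emph{exact} recovery claim for $\calB_t$: one must justify that the Fourier coefficients live on a lattice whose spacing depends only on $t$ (not on $n$), so that coefficient-by-coefficient rounding recovers them exactly. The cleanest route is to work over $\pmon$ directly and observe that $2^t \hat f(S)$ is always an integer for degree-$t$ functions; this can be proved by expanding $f = \sum_{T} \hat f(T) \chi_T$, restricting to any $t$-dimensional subcube on which $f$ agrees with a fixed $\pmo$ value, and reading off integrality of the resulting linear system. The remaining quantitative bookkeeping (sample size, failure probability via union bounds) is routine.
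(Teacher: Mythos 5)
Your proposal is correct: it is precisely the standard Linial--Mansour--Nisan low-degree algorithm, with the Markov/Parseval tail bound $\sum_{|S|>k}\hat f(S)^2\le \I[f]/(k+1)$ for $\calA_t$ and the $2^{-t}$-granularity of Fourier coefficients of degree-$t$ $\pmo$-valued functions for the exact recovery in $\calB_t$. The paper offers no proof of this lemma --- it imports it verbatim as Corollary 3.22 and Theorem 3.36 of \cite{OD14} --- and your reconstruction matches the arguments given there, modulo the cosmetic point that $\Pr_x[\mathrm{sgn}(g(x))\ne f(x)]\le \Expt_x[(f(x)-g(x))^2]$ holds directly with no doubling of the error.
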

The claimed result follows from~\cref{inf:alt}. Applying~\cref{learning}, we obtain 
\begin{itemize}
\item  an exact learning algorithm from random examples with a runtime of 
$n^{O(k)} \poly(n,2^{O(k)})$ for $k=\alt(f)\degtwo(f)^2$ and
\item an $\epsilon$ error learning algorithm from random examples with a runtime $n^{O(\alt(f)\degtwo(f)^2/\epsilon)}$ 
\end{itemize}
thereby removing the dependence on the parameter $n$ in the exponent and improving the runtime for those $f$ such that $\degtwo(f) < \sqrt[4]{n}$. 

\section{Discussion and Open Problems}
In this paper, we showed a limitation of alternation as a Boolean function 
parameter  in settling the \textsc{Sensitivity Conjecture} and \textsc{\XOR~Log-
Rank Conjecture}. In spite of this limitation, we derived that both the above 
conjectures are true for functions whose alternation is upper bounded by $
\poly(\log n)$. En route the proof, we showed that the degree can be upper bounded 
in terms of $\F_2$ degree and alternation (\cref{alt:deg:deg2}) and demonstrated 
its use with three applications. 
In conclusion, we propose the following three directions for further exploration.
\begin{description}
\item[Parameter Trade-offs :]
The family of Boolean functions $\calF$ (in \cref{def:gap-fun}) have the drawback 
that their decision tree depth is very small while exhibiting an exponential gap 
between alternation and sensitivity, and alternation and logarithm of sparsity. On 
the other hand, the family of Boolean functions $\calG$ obtained (in
\cref{alt:sens:sep}) have a large decision tree depth  but could achieve the same 
with only a super linear gap between the above mentioned parameters. These two 
family of functions seems to be at the two extremes ends in terms of the gap 
achievable and the decision tree depth. Thus, an open problem would be to show a 
trade-off between the lower bound on the decision tree depth and the gap that 
can be proven for these parameters. 
\item[Monotone functions have dense spectrum :] Can we show that for every monotone function $f$, $\log \sps(f) = \Omega(\deg(f))$. Note that~\cref{alt:bound} shows that $\log \sps(f) =\Omega(\sqrt{\deg(f)})$  and hence only partially settle this question of Kulkarni and Santha~\cite{KS13}.
\item[Improving upper bound on $\deg(f)$ :] \cref{eq:deg:deg2} says that for any 
Boolean function $f$ and $m \ge 2$, $\deg(f) \le \alt(f) \cdot \degtwo(f) \cdot 
\degm(f)$. Note that the upper bound is weak if $\degtwo(f)$ is large. Hence, we ask if it is possible to improve~\cref{eq:deg:deg2} by showing $\deg(f) \le \alt(f)  \cdot \deg_{m'}(f) \cdot \degm(f) $ for any $m' \ne m$.

\end{description}

\subsection*{Acknowledgments} The authors would like to thank the anonymous reviewers for pointing out errors in the earlier version and for providing constructive comments which improved the presentation of the paper.

\bibliographystyle{plain}
\bibliography{references}
\ifthenelse{\equal{\movetoappendix}{1}}{
        \appendix
        \section{Appendix}
        \includecollection{appendix}
} { }

\end{document}